\newtheorem{defn}[theorem]{Definition}
\newtheorem{prop}[theorem]{Proposition}
\newtheorem{lem}[theorem]{Lemma}
\newtheorem{cor}[theorem]{Corollary}
\begin{document}

\title{Betweenness centrality profiles in trees}


\author{Benjamin Fish\inst{1}\thanks{Contact author:  bfish3@uic.edu.  Benjamin Fish was funded in part by Army Research Office grant \#66497-NS.} \and
Rahul Kushwaha\inst{1} \and
Gy\"orgy Tur\'an\inst{1,2}}

\institute{
Department of Mathematics, Statistics, and Computer Science,\\ University of Illinois at Chicago
\and
MTA-SZTE Research Group on Artificial Intelligence, Szeged
}

\maketitle

\begin{abstract}
{Betweenness centrality of a vertex in a graph measures the fraction of shortest paths going through the vertex.
This is a basic notion for determining the importance of a vertex in a network.
The $k$-betweenness centrality of a vertex is defined similarly, but only considers shortest paths of length at most $k$. The sequence of $k$-betweenness centralities
for all possible values of $k$ forms the betweenness centrality profile of a vertex.
We study properties of betweenness centrality profiles in trees.

We show that for scale-free random trees, for fixed $k$, the expectation of $k$-betweenness centrality strictly decreases as the index of the vertex increases. We also analyze worst-case properties of profiles in terms of the distance of profiles from being monotone, and the
number of times pairs of profiles can cross.  This is related to whether $k$-betweenness centrality, for small values of $k$, may be used instead of having to consider all shortest paths.  Bounds are given that are optimal in order of magnitude.
We also present some experimental results for scale-free random trees.
}
\end{abstract}

\section{Introduction}
The study of large networks uses many graph parameters measuring properties of vertices, groups of vertices, or the whole network.
Some, like connectivity, are classical concepts. Others, like the Bavelas centrality index~\cite{bavelas1948mathematical}, come
from applications in sociology and psychology. Still others, like $k$-betweenness centrality, to be discussed in this paper,
have been introduced recently, partly motivated by efficiency considerations. Several of these notions have mostly been studied
experimentally, but they also raise many questions for theoretical study. We consider some problems of this kind.

A basic property of a vertex in a network is its \emph{centrality}. Possible definitions are surveyed by Borgatti and Everett~\cite{borgatti2006graph}
and Brandes and Erlebach~\cite{brandes2005network}.
One intuition is that a vertex is central if it lies on many shortest paths.
A popular formalization, introduced by Anthonisse~\cite{anthonisse1971rush} and Freeman~\cite{freeman1977set},
is \emph{betweenness centrality}, measuring the fraction of shortest paths going through the vertex.
Betweenness centrality has many applications, for example in examining ``trade routes between medieval cities'' where ``the cities with high betweenness centrality have opportunities for amassing wealth and exerting control that other cities would not have''~\cite{borgatti2006graph,pitts1979medieval}.
Even though betweenness centrality can be computed in polynomial time using all-pairs shortest paths algorithms, for large networks
it is important to reduce the running times. Algorithms are discussed,
for example, in Brandes~\cite{brandes2001faster}, Eppstein and Wang~\cite{EppWan-SODA-01}, and Riondato and Kornaropoluos~\cite{riondato2014fast}.

In this paper, we will focus on the behavior of betweenness centrality in trees.  The distribution of path lengths in trees has been studied before, for example by Faudree et al.~\cite{faudree1973theory}, Chaudhary and Gordon~\cite{chaudhary1991tutte}, Gordon and McDonnell~\cite{gordon1995trees}, and Dankelmann~\cite{dankelmann2011distance}.
Riondato and Kornaropoulos~\cite{riondato2014fast}
give bounds on the VC-dimension of shortest paths of length at most $k$ in arbitrary graphs, which is relevant for determining the sample size needed by
approximation algorithms.

The notion of $k$-\emph{betweenness centrality}, introduced by Borgatti and Everett~\cite{borgatti2006graph},
differs from betweenness centrality in that shortest paths are restricted to length at most $k$.
 In applications, shorter paths may be more important in determining the centrality of a vertex.  Shorter paths may also give more information about the centrality of vertices, e.g., when two vertices have the same centrality when not restricted to shorter paths.  In addition, considering only short paths can have computational advantages as well:  considering only short paths may increase the locality of the computation, which can make computing faster and easier.

An experimental study of $k$-betweenness centrality, and other bounded versions of centrality measures,
is given by Pfeffer and Carley~\cite{pfeffer2012k}.
They consider the following general requirements:
1) computing the bounded versions should be similar to the unbounded version, 2) the bounded versions should approximate
the unbounded one and 3) the bounded versions should converge to the unbounded one when the bound parameter grows.
As an illustration, \cite{pfeffer2012k} shows $k$-betweenness centralities in a tree displaying the required properties.
Approximation and convergence are used in an informal sense in \cite{pfeffer2012k}; in their example
$k$-betweenness centralities converge to betweenness centrality monotonically from below.
Experiments of \cite{pfeffer2012k} on random graphs also support the claim that $k$-betweenness centrality has ``good'' properties.

The experiments of~\cite{pfeffer2012k} motivate a theoretical study of $k$-betweenness centrality.  We ask to what degree does the bounded version of betweenness centrality (i.e.~$k$-betweenness centrality) approximate the unbounded version, and more generally, to what degree does $k$-betweenness centrality approximate $\ell$-betweenness centrality for $k$ smaller than $\ell$?  If smaller values of $k$ approximate the larger values well, then we may safely consider only short paths in the graph.

We consider the maximum size of the gap between the $k$-betweenness centrality and the betweenness centrality.  However, even if the $k$-betweenness centrality approximates the betweenness centrality poorly, in some applications this might not matter.  For example, we might wish to rank vertices by their centrality~\cite{riondato2014fast}, begging the following question:  Is the vertex ranking by betweenness centrality preserved under $k$-betweenness centrality?  Failing that, will the $k$-betweenness centrality increase or decrease as $k$ increases?

Let $BC_k(v)$ be the $k$-betweenness centrality of a vertex $v$ in a graph $G$.
The  \emph{betweenness centrality profile}, or, for simplicity, the \emph{profile} of $v$ is the sequence ${\cal B}(v) = (BC_2(v), BC_3(v), \ldots,  BC_{d}(v))$,
where $d$ is the diameter of $G$.
To answer the above questions we introduce the notions of \emph{dips} and \emph{crossings} in profiles.
Informally, a dip in a profile is non-monotonic segment of the profile, and a crossing between two profiles is a segment where the comparison of the two profiles switches.  If there are many crossings, than rankings on $k$-betweenness centrality are unreliable, and if there are many dips, then estimates as to whether the $k$-betweenness centrality will increase or decrease as $k$ increases will be unreliable.


The $k$-betweenness centralities are properties which, with increasing $k$, transition from being local to being global.
The general, informal conclusion of \cite{pfeffer2012k} is that this transition from local to global is smooth.
Properties of profiles provide quantifiable information about this transition. In what follows we refer to results indicating
a smooth transition as ``positive'', and to results showing pathologies in this transition as ``negative''.
For instance, profiles in a path have no dips and do not cross, indicating a completely smooth transition.
Trees with many dips and crossings represent a non-smooth transition.


In this paper we consider probabilistic and worst-case properties of profiles. We restrict our attention to trees.
Trees are a natural first case to consider as there is exactly one shortest path between any two vertices.

We consider \emph{preferential attachment}, or \emph{scale-free}, random trees~\cite{barabasi1999emergence}.  We consider this model due to its popularity and its ability to model vertices that act as `hubs,' those vertices that are much more central than the rest of the vertices.
Scale-free trees were studied in the 1990's under the name \emph{plane-oriented recursive trees} (see, e.g., Mahmoud~\cite{mahmoud1992distances} and Smythe and Mahmoud~\cite{smythe1995survey}). Results on path lengths in scale-free trees are given by Bollob\'as and Riordan~\cite{bollobas2004shortest} (see also Szab\'o et al.~\cite{szabo2002shortest}).
Recent work includes studies of the influence of the seed graph on the limiting distribution of the maximum degree
(Bubeck et al.~\cite{bubeck14seed}), algorithms identifying the root node of a scale-free tree (Bubeck et al.~\cite{bubeck14adam}), and
 the persistence of a centroid (a vertex minimizing the maximal subtree when chosen as the root) in a scale-free tree (Jog and Loh~\cite{jog16tree}).

\subsection{Results}
We give a positive result for scale-free trees.
We show that for fixed $k$, the expectation of $k$-betweenness centrality strictly decreases as the index of the vertex increases,
i.e., for every $k$ and vertices $v < w$ it holds that $\mathbb{E}[BC_k(v)] > \mathbb{E}[BC_k(w)]$.
This result can be viewed as giving evidence that $k$-betweenness centrality is ``well-behaved'' for random trees.
The proof is based on a formula of Bollob\'as and Riordan~\cite{bornholdt} (see also Bollob\'as and Riordan~\cite{bollobas2004shortest}) for the probability of
the presence of a fixed subgraph in a scale-free tree.
Proving stronger positive results on the distribution of $k$-betweeness centralities, even for trees, seems to be an interesting open problem.
Extending the results to general graphs may not be easy; as far as we know, the results of Bollob\'as and Riordan~\cite{bollobas2004shortest} have not been extended to the general case either.

We also prove negative results by studying worst-case behavior: how many dips and crossings can profiles have?  We consider worst-case results in order to bound the behavior of all profiles.

We also make some simple observations on the worst-case approximation properties of betweenness centrality by $k$-betweenness centralities (Proposition~\ref{pro:appx}).
It is noted that there are trees where the $k$-betweenness centralities of a vertex can be much smaller, resp., much larger, than the betweenness centrality of the same vertex, holding for all possible values of $k$, resp., for $k$ up to a small constant multiple of the diameter.
It follows that there is no constant-factor approximation guarantee when using $k$-betweenness centrality to approximate betweenness centrality.

As the diameter is always an upper bound to the number of dips and crossings, we consider proving lower bounds for the number of dips and crossings in terms of this quantity.
Theorem~\ref{thm:dips} shows that there are trees of diameter $d$ with some vertex
having $\Omega(d)$ dips in its profile.
The construction is simple, but the choice of the parameters and the analysis requires careful computation as the dips are small.
 Theorem~\ref{thm:crossings} gives trees of diameter $d$ with a pair of vertices such that their profiles cross each other $\Omega(d)$ many times.
In particular, it follows from these results that the worst-case behavior of dips and crossings for general graphs can be achieved with trees already.

The paper is structured as follows. In Section~\ref{sec:prel} we give the basic definitions and in Section~\ref{sec:paths} we discuss the case of paths.
Approximation is considered in Section~\ref{sec:appr}.  Scale-free trees are discussed in Section~\ref{sec:sca}.  Then lower bounds on the number of dips and crossings are given in Sections~\ref{sec:dip},
resp., \ref{sec:cro}.  Finally, experimental results are mentioned in Section~\ref{sec:expe}.


\section{Preliminaries} \label{sec:prel}

We consider undirected graphs with unweighted edges. The length of a path is the number of its edges.
The degree of $s$ is denoted by $deg(s)$ and the
distance between $s$ and $t$ is denoted by $\ell(s,t)$. 
The diameter of a graph, denoted by $d$, is the maximum of the distances $\ell(s,t)$.  The set of vertices having distance at most $k$ from $v$,  \emph{not including} $v$, is denoted by $N_v^k$.

Let $\delta_{st}$ be the number of distinct shortest paths between $s$ and $t$. For $v \neq s, t$,
let $\delta_{st}(v)$ be the number of distinct shortest paths between $s$ and $t$ that pass through $v$.
The normalized betweenness centrality of a vertex $v$ is
\[ BC(v) =  \frac{2}{n (n - 1) - 2 |E|} \cdot \sum_{s<t, \,\, 2 \le \ell(s,t)} \frac{\delta_{st}(v)}{\delta_{st}}.\]

Normalization (division by the number terms in the sum, i.e., the number of nonadjacent pairs) ensures that betweenness centrality is $[0,1]$-valued.  (This version of the definition is undefined when the graph is complete, so we define the betweenness centrality of any vertex in the complete graph as $0$.)
\begin{defn}
The normalized $k$-betweenness centrality of a vertex $v$ is
\[ BC_k(v) = \frac{2}{\sum_{i}(|N_i^k| - deg(i))} \cdot \sum_{s<t, \,\, 2 \le \ell(s,t) \le k} \frac{\delta_{st}(v)}{\delta_{st}}.\]
\end{defn}
Here, similarly to the general case, one normalizes by dividing with the number of terms in the sum, i.e.,
the number of pairs of vertices $s,t$ where $2 \le \ell(s,t) \le k$.  In this paper, we drop the term ``normalized'' and simply refer to it as $k$-betweenness centrality.
If $k$ is the diameter then the $k$-betweenness centrality of a vertex is its betweenness centrality.

If the graph is a tree then $BC_k(v)$ is simply the fraction of shortest paths of length between 2 and $k$ which contain $v$ as an interior vertex (the interior vertices of a path are all vertices that are not endpoints).
For example, the center of a star has $2$-betweenness centrality 1.

We refer to the number of shortest paths of length between 2 and $k$ containing $v$ as an interior vertex as $P_k(v)$, and the total number of paths of length between 2 and $k$ as $P_k$.  Also, we refer to the number of shortest paths of length \emph{exactly} $k$ containing $v$ as an interior vertex as $p_k(v)$, and similarly the total number of shortest paths of length \emph{exactly} $k$ as $p_k$.

\begin{defn}
The profile of a vertex $v\in G$ is  \[ {\cal B}(v) = (BC_2(v), BC_3(v), \ldots,  BC_{d}(v)).\]
\end{defn}

We use the following definition for measuring non-monotonicity.

\begin{defn} (Dip in a profile)
Given a profile ${\cal B}(v)$,
an interval \[ BC_i(v),BC_{i+1}(v),\ldots,BC_{j}(v) \] is a \emph{dip} if it can be partitioned into two intervals, the first of which is monotonically non-increasing and the second of which is monotonically non-decreasing, but the entire interval is neither monotonically non-increasing nor non-decreasing.  A profile has $q$ dips if there are $q$ disjoint intervals that are dips.
\end{defn}

Note that if a profile has $q$ dips then at least $q$ entries must be changed to make it monotonic (increasing or decreasing), thus in this sense it is
$q$-far from being monotonic.
The following definition is about comparing two profiles.

\begin{defn} (Crossing)
Profiles ${\cal B}(u)$ and ${\cal B}(v)$ have $m$ \emph{crossings} if there exist indices $i_1 <  \ldots < i_{m+1}$ ($2\le i_s \le d$) where for all such indices, $BC_{i_s}(u) \neq BC_{i_s}(v)$ and the values alternate order (so if $BC_{i_s}(u) < BC_{i_s}(v)$ then $BC_{i_{s+1}}(u) > BC_{i_{s+1}}(v)$ and vice versa).
\end{defn}

As an illustration, consider the example in Figure~\ref{fig:claw_example} below.

\begin{figure}[h]
\centering
\tikzstyle{vertex}=[circle,fill=black,minimum size=10pt,inner sep=0pt]
\tikzstyle{edge}=[-,color=black,line width=1.5]
\tikzstyle{edgelabel}=[color=black]
\begin{tikzpicture}[scale=1, auto]
    \node[vertex] (0) at (0,0) {};
    \node[vertex] (1) at (1,0) [label=above:$v_1$] {}
        edge[edge] node {} (0);
    \node[vertex] (2) at (2,0) [label=above:$v_2$] {}
        edge[edge] node {} (1);
    \node[vertex] (3) at (3,0) [label=above:$v_3$] {}
        edge[edge] node {} (2);
    \node[vertex] (4) at (4,0) [label=above:$v_4$] {}
        edge[edge] node {} (3);
    \node[vertex] (5) at (5,1) [label=above:$v_5$] {}
        edge[edge] node {} (4);
    \node[vertex] (6) at (5,0) [label=above:$v_6$] {}
        edge[edge] node {} (4);
    \node[vertex] (7) at (5,-1) [label=above:$v_7$] {}
        edge[edge] node {} (4);
    \node[vertex] (8) at (6,1)  {}
        edge[edge] node {} (5);
    \node[vertex] (9) at (6,0){}
        edge[edge] node {} (6);
    \node[vertex] (10) at (6,-1) {}
        edge[edge] node {} (7);
\end{tikzpicture}
\caption{Example tree}
\label{fig:claw_example}
\end{figure}
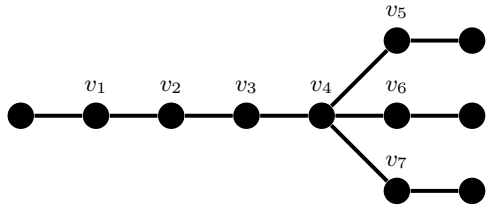

\begin{figure}[h]
\begin{center}
\includegraphics[scale=0.45]{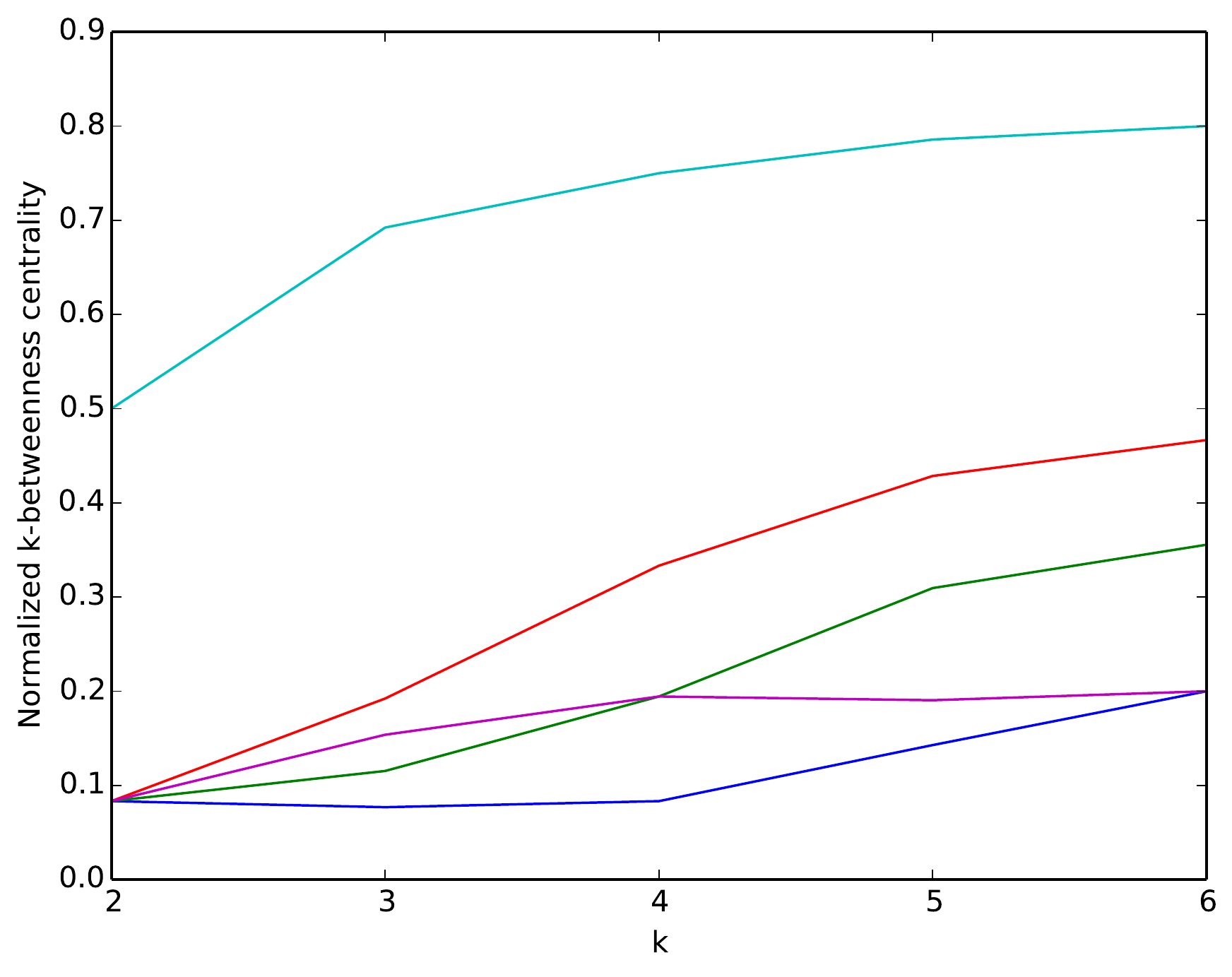}
\end{center}
\caption{The profiles of vertices $v_1$, $v_2$, $v_3$, $v_4$, and $v_5$ from Figure~\ref{fig:claw_example},
in the order $v_4, v_3, v_2, v_5, v_1$ from top to bottom.}
\label{fig:example_profiles}
\end{figure}

The diameter of this tree is 6, so a profile consists of the $k$-betweenness centralities for $k=2$ through $k=6$.  The
profile of the four leaves is the all zero vector as no paths contain them as an internal vertex.
The profiles of $v_5$, $v_6$, and $v_7$ are the same, so we omit the latter two and plot the other five profiles in Figure~\ref{fig:example_profiles}.
Note that the profiles of $v_1$ and $v_5$ are not monotonic, but the rest of the vertices do have monotonic profiles.
It is also interesting to note that $BC(v_1)=BC(v_5)$, but $BC_k(v_1) \le BC_k(v_5)$ for all $k$.
If short paths are considered more relevant then it may be reasonable to consider $v_5$ as having a higher centrality than $v_1$.
Thus, in this case the profiles give useful extra information in addition to betweenness centrality.
On the other hand, the profiles of $v_2$ and $v_5$ cross at $k=4$. In this case, based on the profiles, one might qualify the judgement using betweenness centrality alone that $v_2$ is more central than $v_5$.

\subsection{Paths} \label{sec:paths}
We begin the discussion of $k$-betweenness centralities of trees by considering the simplest nontrivial case, that of paths.
It is relatively easy to find examples of trees where the vertices' profiles behave well.
As a first example, we consider simple paths, where the relevant quantities are easily calculated explicitly.  The profiles of vertices in simple paths neither have dips nor crossings:
\begin{prop}\label{prop:paths}
In every simple path, profiles are monotonically increasing as $k$ increases and pairs of paths do not cross.
\end{prop}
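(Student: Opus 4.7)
The plan is to work with the tree-specific reformulation $BC_k(v) = P_k(v)/P_k$ introduced in Section~\ref{sec:prel}. In a simple path on vertices $v_1, \ldots, v_n$, any two vertices $v_s, v_t$ are connected by a unique path of length $|t-s|$, so $p_\ell = n-\ell$, and counting pairs $(s,t)$ with $s < i < t$ and $t-s = \ell$ gives the closed form
\[
  p_\ell(v_i) \;=\; \min(\ell - 1,\; a_i,\; n-\ell),
\]
where $a_i := \min(i-1,\, n-i)$. The first step is to establish this formula by a routine three-case analysis: $\ell \le a_i+1$ gives $p_\ell(v_i) = \ell - 1$; $a_i + 1 \le \ell \le n - a_i$ gives $p_\ell(v_i) = a_i$; and $n - a_i \le \ell \le n-1$ gives $p_\ell(v_i) = n-\ell$.

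For the monotonicity assertion, I would consider the per-length ratio $r_\ell := p_\ell(v_i)/p_\ell$. From the closed form, $r_\ell$ equals $(\ell-1)/(n-\ell)$, $a_i/(n-\ell)$, or $1$ on the three ranges above, and in each piece it is non-decreasing in $\ell$; at the boundaries the pieces glue continuously, so $(r_\ell)_{\ell=2}^{n-1}$ is non-decreasing overall. Since $BC_k(v_i) = \sum_{\ell=2}^k r_\ell p_\ell \,/\, \sum_{\ell=2}^k p_\ell$ is a weighted average of $r_2, \ldots, r_k$, appending the term $r_{k+1}$, which dominates every prior $r_\ell$ and in particular the running weighted average, can only (weakly) increase the average. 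Hence $BC_k(v_i)$ is monotone non-decreasing in $k$.

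For the non-crossing assertion, observe that $\min(\ell-1,\, a,\, n-\ell)$ is pointwise monotone in $a$: for every fixed $\ell$, if $a_i \ge a_j$ then $p_\ell(v_i) \ge p_\ell(v_j)$. Summing over $\ell = 2, \ldots, k$ gives $P_k(v_i) \ge P_k(v_j)$ for every $k$, and since $P_k$ is a global quantity that does not depend on the vertex, dividing yields $BC_k(v_i) \ge BC_k(v_j)$ for every $k$. Thus the pointwise order between the two profiles at any index agrees with the (total) order on $\{a_i\}_i$, so no alternation — and in particular no crossing — is possible.

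The main obstacle is just the bookkeeping in establishing the closed form $p_\ell(v_i) = \min(\ell-1, a_i, n-\ell)$; once that is in hand, both halves of the proposition reduce to one-line structural observations (monotonicity of prefix weighted averages of a non-decreasing sequence, and monotonicity of the expression $\min(\ell-1, a, n-\ell)$ in the parameter $a$).
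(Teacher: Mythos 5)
Your proposal is correct and follows essentially the same route as the paper's proof: compute $p_\ell(v_i)$ explicitly by a case analysis, observe that the per-length ratio $p_\ell(v_i)/p_\ell$ is non-decreasing in $\ell$ to get monotonicity of the weighted average $BC_k(v_i)$, and observe that $p_\ell(v_i)$ is monotone in the vertex's distance from the nearer endpoint to rule out crossings. The only differences are cosmetic (your unified $\min(\ell-1,a_i,n-\ell)$ formula versus the paper's restriction to $i\le\lfloor n/2\rfloor$ by symmetry, and your making explicit the prefix-weighted-average step that the paper leaves implicit).
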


\begin{proof}
A path of length $n$ has $n-\ell+1$ subpaths of length $\ell$, i.e., $p_\ell = n-\ell+1$.  Label the vertices $0$ through $n-1$ along the path.
A subpath of length $\ell$ starting at vertex $j<i$ includes $i$ if and only if its other end is past the vertex $i$ but it is not past the last vertex, i.e., if $i-j+1\le \ell \le n-j$, or equivalently $i+1-\ell \le j < n-\ell+1$.  It then follows that if $\ell \le i$, there are $\ell-1$ subpaths including $i$ of length $\ell$,
i.e., $p_\ell(i) = \ell - 1$.  Similarly, if $i < \ell \le n+1-i$  then $p_\ell(i) = i$. Finally, if $\ell > n+1-i$ then $p_\ell(i) = n-\ell+1$.
Therefore $p_\ell(i) / p_\ell$ is monotonically increasing in $\ell$ and so
\[ BC_k(i) = \frac{\sum_{\ell = 2}^k p_\ell(i)}{\sum_{\ell = 2}^k p_\ell} \]
is monotonically increasing in $k$. It also follows that $p_\ell(i)$ is monotonically increasing in $i$, so $P_k(i)$ is also monotonically increasing in $i$.
Thus profiles do not cross.
\end{proof}

For the sake of completeness, we now give the exact values of the $k$-betweenness centralities, which may be proved by examining the proof above.  It also may be viewed as a warm-up for the direct calculations of $k$-betweenness centralities in Theorems~\ref{thm:dips} and~\ref{thm:crossings}.


\begin{cor}\label{lem:simple_path_form}
Let $i$ be the $i$th vertex of a simple path of length $n$, for $0\le i \le \lfloor\frac{n}{2}\rfloor$.  For $2\le k \le n$, the number of paths of length
at most $k$ containing vertex $i$ is
\[  P_k(i) =
\begin{cases}
    \binom{k}{2}& \text{if } k \le i\\
    \binom{i}{2} +i(k-i)& \text{if } i < k \le n+1-i\\
    \binom{i}{2} + i(n-2i+1) +(n+1)(k-n+i-1)\\
    + \binom{n+2-i}{2} - \binom{k+1}{2} & \text{otherwise}
\end{cases}
\]
and the $k$-betweenness centrality of a path of length $n$ at vertex $i$ is
\begin{equation*} \label{eq:pat}
BC_k(i) = \frac{P_k(i)}{k(n+1)-n-\binom{k+1}{2}}.
\end{equation*}
\end{cor}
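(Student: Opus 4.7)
The plan is to derive both formulas by straightforward summation, starting from the three-case description of $p_\ell(i)$ already established inside the proof of Proposition~\ref{prop:paths}. Recall those gave
\[
p_\ell(i) =
\begin{cases}
\ell - 1 & \text{if } \ell \le i, \\
i & \text{if } i < \ell \le n+1-i, \\
n - \ell + 1 & \text{if } \ell > n+1-i,
\end{cases}
\]
for $0 \le i \le \lfloor n/2 \rfloor$ (the case assumption ensures the middle interval is nonempty). The identity $P_k(i) = \sum_{\ell=2}^k p_\ell(i)$ then reduces the corollary to a case split according to where $k$ sits relative to the thresholds $i$ and $n+1-i$.

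First I would handle the easy ranges. If $k \le i$ every term in the sum is $\ell-1$, giving $\sum_{\ell=2}^k(\ell-1)=\binom{k}{2}$. If $i<k\le n+1-i$ the sum splits as $\sum_{\ell=2}^i(\ell-1)+\sum_{\ell=i+1}^k i=\binom{i}{2}+i(k-i)$.

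The only mildly fiddly case is $k>n+1-i$, where the sum has three blocks: $\binom{i}{2}$ from $\ell\in[2,i]$, then $i(n-2i+1)$ from $\ell\in[i+1,n+1-i]$, and finally $\sum_{\ell=n+2-i}^{k}(n-\ell+1)$. I would rewrite this last piece as $(n+1)\bigl(k-(n+2-i)+1\bigr)-\sum_{\ell=n+2-i}^{k}\ell$, and then use $\sum_{\ell=n+2-i}^{k}\ell=\binom{k+1}{2}-\binom{n+2-i}{2}$, producing exactly the remainder $(n+1)(k-n+i-1)+\binom{n+2-i}{2}-\binom{k+1}{2}$ stated in the corollary. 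Adding the three blocks gives the third branch of the formula.

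For the denominator of $BC_k(i)$, note that the total number of shortest paths of length exactly $\ell$ in a path of length $n$ is $p_\ell=n+1-\ell$, independent of which vertex is being considered as an interior point. Summing,
\[
P_k=\sum_{\ell=2}^{k}(n+1-\ell)=(n+1)(k-1)-\Bigl(\binom{k+1}{2}-1\Bigr)=k(n+1)-n-\binom{k+1}{2},
\]
which is the denominator in the claimed expression for $BC_k(i)=P_k(i)/P_k$ (since the tree is a path, $BC_k(v)$ equals the fraction of length-$\in[2,k]$ shortest paths through $v$, as noted in the Preliminaries). There is no real obstacle here, only the bookkeeping for the third branch; the restriction $i \le \lfloor n/2 \rfloor$ is used only to ensure the middle range $(i,n+1-i]$ is nonempty, and symmetry covers the other half of the path.
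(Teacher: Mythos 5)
Your proposal is correct and follows exactly the route the paper intends: the paper gives no separate proof, stating only that the corollary ``may be proved by examining the proof above,'' i.e., by summing the three-case formula for $p_\ell(i)$ from the proof of Proposition~\ref{prop:paths}, which is precisely what you do, and your case-by-case sums and the computation of the normalizing denominator $P_k=k(n+1)-n-\binom{k+1}{2}$ all check out.
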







\subsection{Approximation} \label{sec:appr}

Lower bounds are proven by constructing sequences of trees where profiles behave badly.
As a first simple example we consider approximation properties.

A \emph{broom} $H_{mn}$ consists of a path of length $m$ with $n$
 leaves attached to an endpoint. A \emph{double broom} $K_{mn}$ consists of a path of length $m$ with $n$
leaves attached to both endpoints. Double brooms and the middle vertex show that $k$-betweenness centralities can be much smaller
than betweenness centrality for every $k$ smaller than the diameter. Brooms and the center vertex show that $k$-betweenness centralities can be much larger than betweenness centrality, as long as $k$ is at most a small constant fraction of the diameter.
Thus $k$-betweenness centralities in the corresponding ranges cannot be guaranteed to approximate betweenness centrality within a constant factor.

\begin{prop}\label{pro:appx} \emph{a)}
For every $\epsilon > 0$ there is a $d_0$ such that for every $d \ge d_0$ there is
a tree $T$ of diameter $d$ and a vertex $v$ in $T$ such that for all $k<d$ it holds that \[\frac{BC_k(v)}{BC_d(v)} < \epsilon.\]

\emph{b)} For every $C > 0$ there is $d_0$ and $\epsilon > 0$ such that for every $d \ge d_0$ there is a tree $T$ of diameter $d$ and a vertex $v$ in $T$ such that
for every $k \le \epsilon \cdot d$ it holds that
\[\frac{BC_k(v)}{BC_d(v)} > C. \]

\end{prop}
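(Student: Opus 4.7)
The plan is to prove both parts by explicit construction, following the hints in the discussion preceding the statement. For part (a) I would take the double broom $K_{mn}$ with $m = d-2$ and let $v$ be a vertex at the midpoint of its central path; for part (b) I would take the broom $H_{mn}$ with $m = d-1$ and let $v$ be its unique vertex of high degree, i.e.\ the endpoint of the path to which the $n$ leaves are attached. In both cases the strategy is to count paths of each length exactly, identify the dominant contributions to $P_k(v)$ and $P_k$ as $m$ and $n$ vary, and then tune $n$ so that the ratio $BC_k(v)/BC_d(v) = P_k(v)\, P_d / (P_k\, P_d(v))$ has the desired behavior uniformly in $k$.

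For part (a), the key observation is that in $K_{mn}$ every opposite-side leaf-to-leaf path passes through $v$ as an interior vertex and has length exactly $d = m+2$, contributing $n^2$ to $P_d(v)$. All other paths through $v$ are either leaf-to-path-vertex pairs on opposite sides of $v$ ($O(nm)$ of them in total) or path-to-path pairs straddling $v$ ($O(m^2)$ of them), so $P_k(v) = O(nm + m^2)$ for every $k < d$. Meanwhile $P_k$ is bounded below by the number of same-side leaf-leaf pairs, each at distance $2$, giving $P_k \ge n(n-1)$ for every $k \ge 2$; and $P_d = 2n^2 + O(nm + m^2)$, whence $BC_d(v) \to 1/2$ as $n/m \to \infty$. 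Combining the estimates yields $BC_k(v)/BC_d(v) = O(m/n)$ uniformly over $k < d$, and choosing $n$ as a sufficiently large multiple of $m/\epsilon$ completes part (a).

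For part (b), I would set $n = \lceil \alpha m \rceil$ with $\alpha = \alpha(C) > 0$ a small constant to be chosen, and exploit the fact that no path-to-path pair passes through the attachment endpoint $v$. A direct count then gives $P_k(v) = \binom{n}{2} + n(k-1)$ exactly, while $P_k = P_k(v) + \Theta(mk)$ due to pairs among path vertices. At $k = d$ the path-to-path contribution $\binom{m}{2}$ dominates when $m \gg n$, producing $BC_d(v) \approx 2n/m = 2\alpha$. For $k \le \alpha^2 m / 2$ the term $\binom{n}{2} = \Theta(\alpha^2 m^2)$ dominates $\Theta(mk)$, forcing $BC_k(v) \ge 1/2$. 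Hence $BC_k(v)/BC_d(v) \ge 1/(4\alpha)$, and the choice $\alpha = 1/(5C)$, $\epsilon = \alpha^2/2$ gives $BC_k(v)/BC_d(v) > C$ throughout $k \le \epsilon d$.

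The main technical issue in both parts is keeping the bound \emph{uniform} in $k$. In part (a) one must argue that $P_k(v) = O(nm + m^2)$ holds throughout the range $2 \le k \le d-1$, which works precisely because the $n^2$ block of opposite-side leaf pairs all have length exactly $d$, so they never contribute to $P_k(v)$ for $k < d$. In part (b), one must argue that $BC_k(v)$ stays above $1/2$ throughout $k \le \epsilon d$; this is where the threshold $\epsilon d \le \alpha^2 m/2$ enters, and it is the reason why the conclusion can hold only for $k$ up to a small constant fraction of the diameter, exactly as the proposition asserts.
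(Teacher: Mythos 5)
Your proposal is correct and follows essentially the same route as the paper: part (a) via the double broom $K_{mn}$ with $v$ the midpoint of the central path and $n$ a sufficiently large multiple of $m$, and part (b) via the broom $H_{mn}$ with $v$ the attachment vertex, $n = \Theta(\alpha m)$ for a small constant $\alpha$ and $k$ restricted to $O(\alpha^2 m)$. The differences are only cosmetic (e.g., you compute $P_k(v)$ exactly in part (b) where the paper just bounds it), so nothing further is needed.
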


\begin{proof}[Proposition~\ref{pro:appx} (a)]
Consider the double broom $K_{mn}$, where $m$ is even, and let $v$ be the middle vertex on the path of length $m$. The diameter is $d = m+2$.

If $k < d$ then no paths between two leaves have length at most $k$ and go through $v$, no matter whether the leaves are attached to the same endpoint of the central path of the broom, or to different endpoints. Thus counting pairs of leaves attached to the same, resp. to different, endpoints, we get
\[ P_k(v) \le {{m + 2n} \choose 2} - n (n-1) - n^2 \le 2 m n + m^2 +n. \]
The total number of paths of length at most $k$ include paths between neighbors of same endpoint thus $P_k \ge n^2$.
Paths between leaves attached to different endpoints do go through $v$ thus $P_d(v) \ge n^2$.
Thus for some constant $K$ depending on $\epsilon$ if $n \ge K \cdot m$ then
\[ \frac{BC_k(v)}{BC_d(v)} = \frac{P_k(v) \cdot P_d}{P_k \cdot P_d(v)} \le \frac{(2 m n + m^2 +n) \cdot (m + 2n)^2}{n^4} < \epsilon.\]
\end{proof}

\begin{figure}
\centering
\tikzstyle{vertex}=[circle,fill=black,minimum size=10pt,inner sep=0pt]
\tikzstyle{edge}=[-,color=black,line width=1.5]
\tikzstyle{edgelabel}=[color=black]
\begin{tikzpicture}[scale=1, auto]
    \node[vertex] (a) at (0,1) {};
    \node[vertex] (b) at (1,1) {}
        edge[edge] node {} (a);
    \node[vertex] (c) at (2,1) {}
        edge[edge] node {} (b);
    \node[vertex] (v) at (4,1) [label=below:$v$] {};
    \node[vertex] (x) at (5,2) {}
    	edge[edge] node {} (v);
    \node[vertex] (y) at (5,3) {}
    	edge[edge] node {} (v);
    \node[vertex] (z) at (5,0) {}
    	edge[edge] node {} (v);

    \draw [decorate,decoration={brace,amplitude=15pt},xshift=0pt,yshift=8pt, line width = 1]
    (0,1) -- (4,1) node [black,midway,yshift=15pt] {$m$};

     \draw [decorate,decoration={brace,amplitude=15pt,mirror},xshift=8pt,yshift=0pt, line width = 1]
    (5,0) -- (5,3) node [black,midway,xshift=30pt] {$n$};

    \node[draw=none] at (3,1) {$\cdots$};
    \node[draw=none] at (5,1) {$\vdots$};

\end{tikzpicture}
\caption{$H_{mn}$}
\label{fig:claw}
\end{figure}

\begin{proof}[Proposition~\ref{pro:appx} (b)]
Consider the broom $H_{mn}$, shown in Figure~\ref{fig:claw}, and let $v$ be the endpoint with the leaves attached.
The diameter is $d = m+1$. Now we can estimate the relevant quantities as follows.
Looking at the neighbors of $v$ it follows that $P_k(v) \ge {{n+1} \choose 2}$.
The total number of paths of length at most $k$ can be estimated as
\[ P_k \le {{n+1} \choose 2} + k n + 2 k m. \]
This counts paths between the neighbors of $v$, paths between a neighbor of $v$ and one of the $k$ vertices of the path nearest to $v$,
and at most $k$ vertices for each of the $m$ vertices on the path.
The number of paths going through $v$ is $P_d(v) = {{n+1} \choose 2} + m n$, counting paths between the leaf neighbors of $v$, resp., between
a leaf neighbor of $v$ and a vertex on the path. Thus
\[ \frac{BC_k(v)}{BC_d(v)} = \frac{P_k(v) \cdot P_d}{P_k \cdot P_d(v)} \ge
\frac{{{n+1} \choose 2} {{m + n + 1} \choose 2}}{({{n+1} \choose 2} + k n + 2 k m)({{n+1} \choose 2} + m n)}.      \]
Now let $n = \delta \cdot m$ and assume $k \le \delta^2 \cdot m$.
The numerator is $\Theta(\delta^2 (1 + \delta)^2 m^4) = \Theta(\delta^2 m^2)$.
The denominator is upper bounded by substituting the maximal value for $k$, and then we get
\[ \Theta( (\delta^2 m^2 + 2 \delta^3 m^2 + 2 \delta^2 m^2) \cdot (\delta^2 m^2 + 2 \delta m^2)) = \Theta(\delta^3 m^4). \]
Thus choosing a sufficiently small $\delta$ the statement follows.
\end{proof}





\section{Scale-free random trees} \label{sec:sca}
A scale-free random tree (also called a preferential attachment tree, corresponding to the case $m=1$ in the
general preferential attachment model) is formed by adding vertices one at a time. In the beginning there is a single vertex with degree 1 (corresponding to a ``virtual edge'').
The new vertex will be connected to one old vertex, and the probability of being connected to a particular old vertex is proportional to the current degree
of that vertex (thus, with $t$ vertices in the tree already, the sum of the degrees is $2 t - 1$).
We will call the vertices $\{1,\ldots,n\}$. Vertex $v$ is earlier than vertex $w$ when $v<w$, so that if $v<w$ then $v$ was added to the tree earlier than $w$.
Call the tree formed after $n$ vertices have been added $T_n$.

Bollob\'as and Riordan~\cite{bollobas2004shortest} give asymptotic results for the distribution of length-$k$ paths in the scale-free trees, including that all but $o(n^2)$ of the paths are of length $\Theta(\log n)$.  Their work is based on a formula for the probability of the scale-free tree $T_n$ containing a given
subgraph~\cite{bornholdt}. If the subgraph is a path then the formula simplifies~\cite{bollobas2004shortest}.

Let $P$ be a path in $T_n$ and let $a$ and $b$ be the endpoints of $P$, assuming, without loss of generality, that $a<b$.  Let $c$ be the least common ancestor of $a$ and $b$ in the path, where $a=c$ is possible (where the least common ancestor is the least vertex $c$ such that $a<c$ and $b<c$).  Finally, let $L$ be the set of vertices $v$ of $P$ such that $c<v<a$ and $R$ the set with $a<v<b$.
Thus the set of interior vertices of $P$ is $\{c\} \cup L \cup R$. The tuple
$(a,b,c,L,R)$ does not completely determine the path:  an element of $L$ may either be on the path from $a$ to $c$ or from $c$ to $b$.  Nevertheless, the probability that the path is in $T_n$ depends only on $(a,b,c,L,R)$.

\begin{lem}[Bollob\'as and Riordan~\cite{bollobas2004shortest}]\label{lem:path_prob}
Given a path $P$, let $a,b,c,L$, and $R$ be as above.  Then the probability that $P$ is in $T_n$, written $q(a,b,c,L,R)$, is

\[q(a,b,c,L,R) = \left\{\begin{array}{rc}\frac{2}{2c-1}\cdot\frac{1}{2b-2}\prod_{i\in L}\frac{1}{2i-1}\prod_{i\in R}\frac{1}{2i-2}\prod_{t = a+1}^{b}\frac{2t-2}{2t-3}&\text{ if }a\neq c\\
\frac{1}{2b-2}\prod_{i\in R}\frac{1}{2i-2}\prod_{t = a+1}^{b}\frac{2t-2}{2t-3}&\text{ if }a = c.
\end{array}\right.\]
\end{lem}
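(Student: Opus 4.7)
The plan is to prove the formula by induction on $|V_c|$, the number of edges in $P$, after the observation that $\Pr[P\subseteq T_n]=\Pr[P\subseteq T_b]$ (once all path vertices have been added the event is determined), so one may take $n=b$. The base case $|V_c|=1$ is a single edge $(a,b)$ with $a=c$; here $\Pr[p(b)=a]=\mathbb{E}[\deg_{T_{b-1}}(a)]/(2b-3)$, and the standard one-step martingale $\mathbb{E}[\deg_{T_{m+1}}(v)\mid T_m]=\deg_{T_m}(v)\cdot 2m/(2m-1)$ (which follows directly from the attachment rule, since the degree sum at time $m$ is $2m-1$) with initial value $\deg_{T_a}(a)=1$ gives $\mathbb{E}[\deg_{T_{b-1}}(a)]=\prod_{m=a}^{b-2}2m/(2m-1)$, which matches the $a=c$ branch of the formula after algebraic rearrangement.

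For the inductive step with $|V_c|\ge 2$, let $b=\max P$ and $b^{*}=p_P(b)$, and set $P'=P\setminus\{b\}$. By the tower property,
\[q(P)\;=\;q(P')\cdot \mathbb{E}[\deg_{T_{b-1}}(b^{*})\mid P'\subseteq T_n]/(2b-3),\]
so the key is to evaluate the conditional expected degree. Tracking $D_t:=\mathbb{E}[\deg_{T_t}(b^{*})\mid P'\subseteq T_n]$ from $t=b^{*}$ (where $D_{b^{*}}=1$) to $t=b-1$, each non-path step $t$ updates $D_t=D_{t-1}\cdot(2t-2)/(2t-3)$, each path step $t$ with $p_P(t)=b^{*}$ contributes a deterministic $+1$, and other path steps leave $D_t$ unchanged. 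The only path-child of $b^{*}$ in $P$ is $b$ (since $b^{*}$'s unique path-neighbor of larger index lies on the $b$-arm), which is absent from $P'$, so no $+1$ updates occur in the generic case $b^{*}\ne c$; in the special case $b^{*}=c$ with $a\ne c$, however, $c$ has the additional path-child $v_{a,1}$ on the $a$-arm and a single deterministic $+1$ is accumulated at time $v_{a,1}$. This isolated $+1$ is precisely the source of the factor $2/(2c-1)$ in the $a\ne c$ branch of the formula (it produces the numerator ``$2$'' upon evaluating $D_{b-1}$, which persists through subsequent multiplicative updates).

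The remaining work is to check algebraically that $q(P)/q(P')$, read off from the claimed formula applied to each of $P$ and $P'$, equals $D_{b-1}/(2b-3)$. Two sub-cases arise. Case A is $b^{*}>a$: here $b^{*}$ becomes the maximum endpoint of $P'$ and only the factors $1/(2b-2)$, the $R$-factor $1/(2b^{*}-2)$, and the product $\prod_{t=a+1}^b(2t-2)/(2t-3)$ change between $P$ and $P'$, and the resulting telescoping matches $\prod_{m=b^{*}}^{b-2}2m/(2m-1)$. Case B is $b^{*}<a$: here the roles of $a$ and $b^{*}$ swap in $P'$'s parametrization, forcing a repartition of $L$ as $L'\cup\{b^{*}\}\cup R'$ (and necessarily $R=\emptyset$ in $P$, since all $b$-arm interior vertices lie below $a$). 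The main obstacle is case B together with the dichotomy $b^{*}=c$ vs.\ $b^{*}\ne c$, since one must track simultaneously when the $2/(2c-1)$ factor is introduced in the recursion and how the $L,R$ parametrization shifts; once the bookkeeping is fixed the verification reduces to standard cancellations among $(2t-2)/(2t-3)$ factors.
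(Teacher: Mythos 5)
First, a point of reference: the paper does not prove this lemma at all --- it is imported verbatim from Bollob\'as and Riordan --- so there is no in-paper argument to compare against, and your attempt has to stand on its own. Your base case is correct, and so is the recursion skeleton $q(P)=q(P')\cdot\mathbb{E}[\deg_{T_{b-1}}(b^{*})\mid P'\subseteq T_n]/(2b-3)$. The genuine gap is in the claimed evolution of $D_t=\mathbb{E}[\deg_{T_t}(b^{*})\mid P'\subseteq T_n]$: the rule ``$D_t=D_{t-1}\cdot(2t-2)/(2t-3)$ at non-path steps'' is false, because the conditioning event still contains attachment constraints at times \emph{after} $t$, and these reweight the law of $T_t$ away from its unconditional law even when none of them attaches to $b^{*}$. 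Concretely, take $P$ to be the path $5\!-\!2\!-\!1\!-\!3\!-\!6$ realized by $p(2)=1$, $p(5)=2$, $p(3)=1$, $p(6)=3$, so $a=5$, $b=6$, $c=1$, $L=\{2,3\}$, $R=\emptyset$, $b^{*}=3$, $P'=5\!-\!2\!-\!1\!-\!3$. Summing over the three choices of $p(4)$ gives $\mathbb{E}[\deg_{T_4}(3)\mid p(3)=1,\;p(5)=2]=7/6$, whereas your rule predicts $D_4=1\cdot\tfrac{6}{5}$; the discrepancy arises because conditioning on $\{p(5)=2\}$ size-biases $\deg_{T_4}(2)$ and hence makes $p(4)=3$ \emph{less} likely than $1/5$. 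The error survives to the output: $q(P')=4/35$, so your recursion yields $q(P)=\tfrac{4}{35}\cdot\tfrac{6/5}{9}=\tfrac{8}{525}$, while both brute force and the stated formula give $q(P)=\tfrac{2}{135}\neq\tfrac{8}{525}$. So the induction as described derives a different, incorrect formula; this is not a bookkeeping issue that the Case A/Case B analysis could repair.

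What you are missing is exactly the hard part of the Bollob\'as--Riordan computation: one cannot track the conditional expectation of a single degree through time steps at which other conditioning events are still pending. (Your other two rules are fine: a path step whose target is not $b^{*}$ does leave $D_t$ unchanged, and a path step targeting $b^{*}$ adds $1$; it is only the free steps interleaved with later constraints that break.) The standard fixes are either to peel off the largest attachment time among \emph{all} edges of $P$ and carry the joint quantity $\mathbb{E}\bigl[\prod_{j}\deg_{T_{t_j-1}}(u_j)\,\mathbf{1}[\cdots]\bigr]$ over all remaining edges --- this satisfies a clean one-step recursion because everything conditioned on is then in the past, at the cost of tracking cross-moments of several degrees --- or to pass to the P\'olya-urn/exchangeability representation of the attachment process, where such product expectations factor explicitly. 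Either way, the correlations between the degrees of the different attachment targets must enter the bookkeeping; the single-degree martingale argument discards them and, as the example shows, the omission changes the answer.
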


We use this lemma to show that the sequence of the expected values of the $k$-betweeness centralities in scale-free random trees do not cross.

\begin{theorem} \label{thm:pref8}
Let $v$ and $w$ be nodes in $T_n$ such that $v<w$.  Then for all $2\le k < n$ it holds that \[\mathbb{E}[BC_k(v)] > \mathbb{E}[BC_k(w)].\]
\end{theorem}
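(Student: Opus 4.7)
The plan is an involution argument on labeled scale-free trees. Let $\phi$ denote the transposition of labels $v \leftrightarrow w$; this is an involution on labeled trees that leaves the unlabeled structure fixed, so $P_k(\phi(T)) = P_k(T)$ and $BC_k(v,\phi(T)) = BC_k(w,T)$. Re-indexing the expectation by $\phi$ gives the identity
\[
\mathbb{E}[BC_k(v) - BC_k(w)] = \tfrac{1}{2}\sum_{T}\bigl(\Pr[T] - \Pr[\phi(T)]\bigr)\bigl(BC_k(v,T) - BC_k(w,T)\bigr),
\]
so the theorem reduces to showing that $\Pr[T]-\Pr[\phi(T)]$ and $BC_k(v,T)-BC_k(w,T)$ share a common sign on every $\phi$-orbit, with strict positivity on at least one orbit.

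For the sign comparison I would use the preferential-attachment factorization $\Pr[T]=\prod_{u=2}^{n}\deg_{T_{u-1}}(\mathrm{parent}_T(u))/(2u-3)$, which is the tree-level analogue of Lemma~\ref{lem:path_prob}. Setting $V:=\sigma_T^{-1}(v)$ and $W:=\sigma_T^{-1}(w)$, the contributions at steps $u\notin[v,w]$ cancel in the ratio $\Pr[T]/\Pr[\phi(T)]$, and the surviving factors at the critical steps $u\in[v,w]$ encode the comparative degree growth of $V$ versus $W$. These are precisely the quantities that control whether short paths tend to pass through $V$ or through $W$, so the ratio exceeds $1$ exactly when the structurally more central vertex carries the smaller label. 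Strict positivity is immediate in the ancestral case: if $V$ is an ancestor of $W$, then $\phi(T)$ violates the ``parent has smaller label'' rule, so $\Pr[\phi(T)]=0$, while $V$ lies on every short path from $W$ to an ancestor of $V$ and hence $BC_k(v,T)>BC_k(w,T)$. Trees of this form occur with positive probability (e.g.\ when $w$ is attached directly to $v$), so a single strict orbit is guaranteed.

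The main obstacle is extending the sign agreement to the non-ancestral orbits, where the intermediate steps $u\in(v,w)$ must be tracked: the subtrees $T_{u-1}$ and $\phi(T)_{u-1}$ differ in containing $V$ versus $W$, and this alters the degrees of their common neighbors, which in turn multiplies through the later attachment probabilities. A cleaner alternative, which bypasses the intermediate bookkeeping entirely, is a path-level pairing driven directly by Lemma~\ref{lem:path_prob}. Expanding $\mathbb{E}[BC_k(v)] = \sum_{P}\mathbf{1}_{v\in\mathrm{int}(P)}\cdot\mathbb{E}[\mathbf{1}_{P\in T_n}/P_k]$ and pairing each short path $P$ with its label-swap $\phi(P)$, the Bollob\'as--Riordan factors $1/(2i-1)$ and $1/(2i-2)$ are monotonically decreasing in $i$, so the swap sends $P$ to a strictly less-probable path whenever $v$ occupies the more ``expensive'' role; a short case analysis on the path's tuple $(a,b,c,L,R)$ handles role changes under the swap, and the factor $\mathbb{E}[1/P_k\mid P\in T_n]$ is controlled by coupling the conditional laws of $T_n$ given $P\in T_n$ versus given $\phi(P)\in T_n$ using the same label swap, which is $P_k$-preserving.
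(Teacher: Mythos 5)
Your primary (tree-level) involution fails at the reduction step: it is not true that $\mathbb{P}[T]-\mathbb{P}[\phi(T)]$ and $BC_k(v,T)-BC_k(w,T)$ share a sign on every orbit. Take $n=5$, $v=2$, $w=3$, and let $T$ be the increasing tree with parents $p(2)=1$, $p(3)=2$, $p(4)=p(5)=3$. Then vertex $2$ has degree $2$ and vertex $3$ has degree $3$, so $BC_2(v,T)=1/p_2<3/p_2=BC_2(w,T)$; but $\phi(T)$ is not a valid recursive tree (in $\phi(T)$ vertex $2$ is not adjacent to vertex $1$, whereas $T_2$ is always the edge $\{1,2\}$), so $\mathbb{P}[\phi(T)]=0<\mathbb{P}[T]$ and this orbit contributes strictly \emph{negatively} to your sum. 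Since orbits of both signs occur, the orbit decomposition by itself cannot yield the theorem --- one would have to show the positive orbits outweigh the negative ones, which is the original problem again. The same example also refutes your ``immediate'' ancestral case: there $v$ is an ancestor of $w$, yet $BC_k(v,T)<BC_k(w,T)$; being an ancestor does not force more short paths through $v$.

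Your fallback is the right idea and is essentially the paper's route: the paper compares consecutive vertices $v$ and $v+1$ (general $v<w$ follows by transitivity), reduces to showing $\mathbb{E}[p_k(v)]>\mathbb{E}[p_k(v+1)]$, and constructs an explicit injection $f$ from length-$k$ paths with $v+1$ interior into length-$k$ paths with $v$ interior satisfying $q(f(P))\ge q(P)$ via Lemma~\ref{lem:path_prob}. But the ``short case analysis'' you defer is the entire proof, and it is not a bare label swap: in the cases where $v$ is an endpoint of $P$ ($v=a=c$ and $v=a\ne c$) the paper deletes $v+1$ from the interior and re-attaches it as a parent of $v$, changing $a$, $c$, $L$ and $R$ simultaneously; one must then check injectivity \emph{across} the six cases, verify each probability ratio, and exhibit a source of strictness (the paper uses both the cases with ratio exceeding $1$ and an explicit path outside the image of $f$). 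Working with general $v<w$ instead of $v+1$ multiplies the cases, since labels strictly between $v$ and $w$ can migrate between $L$, $R$ and $c$ under the swap. Finally, your proposed handling of the normalization does not close: the conditional laws of $T_n$ given $P\in T_n$ and given $\phi(P)\in T_n$ are not intertwined by the label swap precisely because the swap does not preserve the measure (as the counterexample above shows); the paper instead removes the normalization at the outset by noting it is common to $v$ and $w$ and passing directly to the unnormalized path counts.
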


\begin{proof}
The normalization factor for both centralities is the same, so $BC_k(v) > BC_k(v+1)$ if and only if $P_k(v) > P_k(v+1)$.
Therefore it suffices to show that
\begin{equation} \label{eq:prob4}
\mathbb{E}[p_k(v)] > \mathbb{E}[p_k(v+1)].
\end{equation}
The case where $v+1=n$ is trivial, so assume $0\le v\le n-2$.
It holds that $\mathbb{E}[p_k(v)] = \sum_{P \ni v} p(a,b,c,L,R)$,
where the sum is over all possible paths of length $k$ containing $v$ as an interior vertex.
Inequality (\ref{eq:prob4}) is proved by constructing an injective map $f$ from the set of paths of length $k$ containing $v+1$ as an interior vertex into the set
of paths of length $k$ containing $v$ as an interior vertex such that for every path $P$ it holds that
$\mathbb{P}[f(P)\in T_n] \ge \mathbb{P}[P\in T_n]$ and for some paths the inequality is strict.


Consider a path $P$ containing $v+1$ as an interior vertex. If $v$ is an interior vertex of $P$ as well then $f(P) = P$.
If $P$ is not a vertex of $P$ then $f(P)$ is obtained by replacing $v+1$ with $v$, i.e., neighbors of $v+1$ become neighbors of $v$ and $v+1$ is deleted.
The only remaining case to consider is when $v$ is an endpoint of $P$. It has to be the case then that $v = a$, and $v+1$  belongs to the subpath between $c$ and $b$.
In order to construct $f(P)$, we delete $v+1$ from the subpath, and add it as a parent of $v$. Thus $v+1$ becomes the vertex $a$ in $f(P)$ and $v$ becomes an interior vertex
on the path from $c$ to $a$ in $f(P)$. Note that the in-degree of $c$ in $f(P)$ is 2, even if it had in-degree $1$ in $P$.


We now define $f$ more formally.  In view of (\ref{eq:prob4}), it suffices to show how $f$ acts on the tuple $(a,b,c,L,R)$ corresponding to the path $P$:

\[f(a,b,c,L,R) := \left\{\begin{array}{ll}
(a,b,c,L,R)						&\text{if } v\text{ is interior in }P, \\
(a,b,v,L,R)						&\text{if }v\not\in P\text{ and }v+1=c,\\
(a,b,c,L,R\cup\{v\}\setminus\{v+1\}) 	&\text{if }v\not\in P\text{ and }v+1 \in R,\\
(a,b,c,L\cup\{v\}\setminus\{v+1\},R)	&\text{if }v\not\in P\text{ and }v+1 \in L,\\
(v+1,b,v,\emptyset,R\setminus\{v+1\})	&\text{if } v=a=c, \\
(v+1,b,c,L\cup\{v\},R\setminus\{v+1\})	&\text{if } v=a\neq c.
\end{array}\right.\]
Abusing notation slightly, we will then define $f(P)$ as the natural path that corresponds with the tuple $f(a,b,c,L,R)$, where a given vertex in $L$ under $P$ stays on the subpath between $a$ and $b$ in $f(P)$ and vice versa (and a vertex $v$ added to $L$ stays on the side corresponding to the removed $v+1$).  Note that not all of these six cases (which we will call (1) through (6) in descending order) exist for all values of $v$ and $k$.

For the injectivity of $f$ note that the mapping preserves the path length. It follows directly from the definition that the mapping is injective within each
case of the definition. Paths obtained in the different cases have (1) $v+1$ as an interior vertex, (2), (3), (4) do not contain $v+1$, (5), (6) contain $v+1$ as
an endpoint. Cases (2), (3), (4) are distinguished by $v$ being the root, in $R$ and in $L$. Cases (5), (6) are distinguished by $v+1$ being a parent of the root or not.


It remains to show that $\mathbb{P}[f(P)\in T_n] \ge \mathbb{P}[P\in T_n]$ and inequality holds in some cases.  The intuition is that under $f$ the set of vertices in the path stay the same or get smaller, e.g. $v+1$ is removed in favor of $v$, and that a path whose vertices are earlier is more likely.
The relation between $\mathbb{P}[f(P)\in T_n]$ and $\mathbb{P}[P\in T_n]$ can be calculated explicitly using Lemma~\ref{lem:path_prob}.

 For example, in the second case ($v\not\in P$ and $v+1=c$), Lemma~\ref{lem:path_prob} gives
\[ q(a,b,v,L,R) = \frac{2v+1}{2v-1}q(a,b,v+1,L,R) > q(a,b,v+1,L,R). \]

 In the last case ($v=a\neq c$), again using Lemma~\ref{lem:path_prob} we get
\begin{eqnarray}
&&q(v+1,b,c,L\cup\{v\},R\setminus\{v+1\}) \nonumber \\
&=& \frac{2}{2c-1}\cdot\frac{1}{2b-2}\cdot\frac{2v}{2v-1}\prod_{i\in L}\frac{1}{2i-1}\prod_{i\in R}\frac{1}{2i-2}\prod_{t = a+2}^{b}\frac{2t-2}{2t-3}
= q(v,b,c,L,R). \nonumber
\end{eqnarray}

The other cases are similar:

 \[\left\{
\begin{array}{lr}
q(a,b,c,L,R) = q(a,b,c,L,R) 								&\text{if } v\text{ is interior in }P,\\
q(a,b,v,L,R)  = \frac{2v+1}{2v-1}q(a,b,c,L,R)					&\text{if }v\not\in P\text{ and }v+1=c,\\
q(a,b,c,L,R\cup\{v\}\setminus\{v+1\})= \frac{2v}{2v-2}q(a,b,c,L,R)	&\text{if }v\not\in P\text{ and }v+1 \in R,\\
q(a,b,c,L\cup\{v\}\setminus\{v+1\},R) = \frac{2v+1}{2v-1}q(a,b,c,L,R)	&\text{if }v\not\in P\text{ and }v+1 \in L,\\
q(v+1,b,v,\emptyset,R\setminus\{v+1\}) = 2\cdot q(a,b,c,L,R)		&\text{if } v=a=c,\\
q(v+1,b,c,L\cup\{v\},R\setminus\{v+1\}) = 	q(a,b,c,L,R)			&\text{if } v=a\neq c.
\end{array}
\right. \]

Thus $\mathbb{E}[p_k(v)] \ge \mathbb{E}[p_k(v+1)]$.
Strict inequality comes from noticing that in some cases $\mathbb{P}[f(P)\in T_n] > \mathbb{P}[P\in T_n]$ (namely the cases where $v\not\in P$ or $v=a=c$) or noticing that $f$ is not surjective.  However, some of these cases do not always exist for all values of $v$ and $k$.
If $v\le n-k$, then certainly the case where $v=a=c$ exists.  This implies this case exists for $v=1$ because $k\le n-1$.  Otherwise, for $2 \le v \le n-2$, we show that $f$ is not surjective:  Consider the path $n, w_1,\ldots w_s,v,w_{s+1}, \ldots, w_{k-3},1,v+1$ for any appropriate $s$.  Here $a=v+1$, $b=n$, and $c=1$.  This path cannot be in the image of $f$:  $v+1$ is in this path, but it is not an interior vertex, $v\neq c$, and $L=\emptyset$.  On the other hand, no path $f(P)$ satisfies each of these properties, but this path has $v$ as an interior vertex, completing the proof.
\end{proof}

\section{Dips} \label{sec:dip}

In this section we show that profiles in trees can be arbitrarily far from monotonic, in the sense that a profile may have many dips.
One difficulty in proving such a result is that a large number of length-$k$ paths not containing a vertex $v$ - which contributes to a decrease in its
$k$-betweenness centrality compared to previous values - affects the normalization terms for other $\ell$-betweenness centralities.  This means constructing a tree so that a given dip occurs may interfere with $v$ having other dips in its profile. The following theorem shows that this barrier is not insurmountable.


\begin{theorem}\label{thm:dips}
For any integer $i\ge 1$, there is a tree $G$ of diameter $\Theta(i)$ and a vertex $v$ in $G$ such that the profile of $v$ has $\Omega(i)$ dips.
\end{theorem}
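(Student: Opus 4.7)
The plan is to construct a tree in which $v$ sits at the center of two arms, decorated on the left with leaf bouquets at even-indexed positions on the main path and on the right with leaf bouquets at odd-indexed positions. The parity mismatch forces through-$v$ leaf-to-leaf pairs (contributing to $p_k(v)$) to have odd total length, while within-one-side leaf-to-leaf pairs (contributing only to $p_k - p_k(v)$) have even total length. Consequently $p_k(v)$ carries $N^2$-order terms only at odd $k$, and $p_k$ carries $N^2$-order spikes only at even $k$. This makes $p_k(v)/p_k \approx 1$ at every odd $k$ and $p_k(v)/p_k = O(1/(iN))$ at every even $k$, so once the running average $BC_{k-1}(v)$ has been lifted by a few odd steps, each subsequent even $k$ is a strict drop, producing the desired dips.

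Concretely, I would let $v$ have two neighbors $y_1$ and $x_1$, extend each to a bare path of length $m = 2i + 2$, and for every $a \in \{1, \ldots, i\}$ attach $N$ pendant leaves to $y_{2a}$ and $N$ pendant leaves to $x_{2a-1}$, where $N$ is a sufficiently large absolute constant. The diameter is $2m = \Theta(i)$. Writing $A_d$ and $B_d$ for the number of left-side and right-side vertices at distance $d$ from $v$, one has $A_d = 1+N$ exactly at odd $d \in \{3, 5, \ldots, 2i+1\}$ and $B_d = 1+N$ exactly at even $d \in \{2, 4, \ldots, 2i\}$, with $A_d = B_d = 1$ elsewhere in $[1, m]$. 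The convolution $p_k(v) = \sum_d A_d B_{k-d}$ therefore has $N^2$-leading contributions only when an odd-cluster $d$ and an even-cluster $d'$ sum to $k$, i.e.\ only at odd $k$.

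The key steps are: (i) tabulate $A_d, B_d$ and convolve to derive $p_{2\ell+1}(v) = (\ell-1)N^2 + O(\ell N)$ and $p_{2\ell}(v) = 2(\ell-1)N + O(\ell)$; (ii) add within-left, within-right, and $v$-endpoint counts to obtain $p_{2\ell} = 2(i-\ell+1)N^2 + O(iN + m)$ and $p_{2\ell+1} = (\ell-1)N^2 + O(iN + m)$, the key point being that same-side cross-bouquet leaf pairs at index-distance $\ell-1$ always sit at even distance $2\ell$ and contribute no $N^2$-order term at odd $k$; (iii) sum along the profile to get $BC_{2\ell-1}(v) = \Theta((\ell-1)(\ell-2)/(i(2\ell-3)))$ to leading order; (iv) show that the dip inequality $p_{2\ell}(v)/p_{2\ell} < BC_{2\ell-1}(v)$ reduces, after clearing denominators, to a polynomial inequality in $N, i, \ell$ that holds for every $\ell \in \{3, 4, \ldots, i\}$ once $N$ is a fixed, sufficiently large constant; (v) observe that the adjacent odd step strictly increases $BC_k(v)$ because its per-step ratio is $1 - O(1/N)$. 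This produces $i - 2$ disjoint decrease-then-increase intervals in the profile, hence $\Omega(i)$ dips in the sense of Section~\ref{sec:prel}.

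The main obstacle is that each dip is small: the drop at an even $k$ is only a multiplicative $(1 - \Theta(1/(iN)))$ correction to the previous $BC_{k-1}(v)$, so the analysis must carry both the $N^2$- and $N$-order terms of every sum and match them carefully. Two boundary issues must also be handled: the large $p_2$-spike from within-bouquet leaf pairs drags $BC_3(v)$ down enough that the candidate dip at $k = 4$ is absorbed, and at $k = 2i+2$ the cross-bouquet $N^2$-spike in $p_k$ vanishes so no dip occurs there. Both boundary losses cost only $O(1)$ dips, and the total count of dips remains $\Omega(i)$.
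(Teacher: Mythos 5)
Your construction is genuinely different from the paper's, and the underlying idea is sound. The paper's tree is a single long path with pairs of length-$j$ branches attached every $j+1$ steps, with $v$ placed next to one end of the path: there $P_k(v)$ grows smoothly (it just counts vertices beyond $v$ within distance $k$ of the end leaf), and all of the oscillation is forced into the denominator $P_k$, which jumps whenever $k$ first becomes long enough to join a new pair of branches; the dips appear at $k\equiv 3 \pmod{j+1}$ with $j=5$ fixed. You instead centre $v$ and use a parity offset between the left and right bouquets so that the $N^2$-order mass of $p_k(v)$ lands only on odd $k$ while the $N^2$-order mass of $p_k$ lands only on even $k$. This is a clean alternative mechanism: it produces a dip at every other index rather than every sixth, disjointness of the dip intervals is immediate (take every other odd--even--odd triple), and like the paper's tree it has $\Theta(i)$ vertices, so the dip count is also linear in $n$.

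Two points in your outline need repair before it is a proof. First, with $N$ a fixed constant the claim that $p_k(v)/p_k\approx 1$ at odd $k$ (your stated justification for step (v)) is false: $p_{2\ell+1}=(\ell-1)N^2+\Theta(iN)$ because of same-side leaf-to-path-vertex pairs, so for $\ell\ll i/N$ the increment ratio is $\Theta(\ell N/i)$, not $1-O(1/N)$. The rise at odd $k$ still occurs, but for the subtler reason that $\Theta(\ell N/i)$ exceeds the running average $BC_{2\ell}(v)=\Theta(\ell/i)$ by a factor of $\Theta(N)$ — all three quantities you must compare (odd increment, even increment, running average) have the same order in $i$ and are separated only by powers of $N$, which is precisely the ``dips are small, so the computation must be exact'' difficulty the paper warns about and resolves by tabulating $P_k$ explicitly. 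Second, for $\ell$ near $i$ the even-$k$ spike $2(i-\ell+1)N^2$ degenerates to the same order as the $O(iN)$ error terms and the strict decrease at $k=2\ell$ is no longer guaranteed; restricting to $\ell\le i/2$, say, still leaves $\Omega(i)$ dips. Neither issue breaks the approach, but both must be carried explicitly through your steps (iii)--(v).
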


\begin{figure}[t]
\centering
\tikzstyle{vertex}=[circle,fill=black,minimum size=10pt,inner sep=0pt]
\tikzstyle{edge}=[-,color=black,line width=1.5]
\tikzstyle{edgelabel}=[color=black]
\begin{tikzpicture}[scale=0.75, auto]
    \node[vertex] (c1) at (0,0) {};
    \node[vertex] (v) at (1,0) [label=below:$v$]{}
        edge[edge] node {} (c1);
    \node[vertex] (c2) at (2,0) {}
        edge[edge] node {} (v);
    \node[vertex] (c3) at (3,0) {}
        edge[edge] node {} (c2);

    \node[vertex] (ba1) at (3,1) {}
        edge[edge] node {} (c3);
    \node[vertex] (ba2) at (3,2) {};

    \node[vertex] (bb1) at (3,-1) {}
        edge[edge] node {} (c3);
    \node[vertex] (bb2) at (3,-2) {};

    \node[vertex] (c4) at (4,0) {}
    	edge[edge] node {} (c3);
    \node[vertex] (c5) at (5,0) {};
    \node[vertex] (c6) at (6,0) {}
        edge[edge] node {} (c5);

    \node[vertex] (ba3) at (6,1) {}
        edge[edge] node {} (c6);
    \node[vertex] (ba4) at (6,2) {};

    \node[vertex] (bb3) at (6,-1) {}
        edge[edge] node {} (c6);
    \node[vertex] (bb4) at (6,-2) {};

    \node[vertex] (c7) at (7,0) {}
    	edge[edge] node {} (c6);
    \node[vertex] (c8) at (8,0) {};
    \node[vertex] (c9) at (10,0) {};
    \node[vertex] (c10) at (11,0) {}
    	edge[edge] node {} (c9);
	
    \node[vertex] (ba5) at (11,1) {}
        edge[edge] node {} (c10);
    \node[vertex] (ba6) at (11,2) {};

    \node[vertex] (bb5) at (11,-1) {}
        edge[edge] node {} (c10);
    \node[vertex] (bb6) at (11,-2) {};

    \node[vertex] (c11) at (12,0) {}
    	edge[edge] node {} (c10);
    \node[vertex] (c12) at (13,0) {};

    \draw [decorate,decoration={brace,amplitude=15pt},xshift=0pt,yshift=8pt, line width = 1]
    (2,2) -- (5,2) node [black,midway,yshift=15pt] {$1$};
    \draw [decorate,decoration={brace,amplitude=15pt},xshift=0pt,yshift=8pt, line width = 1]
    (5,2) -- (8,2) node [black,midway,yshift=15pt] {$2$};
    \draw [decorate,decoration={brace,amplitude=15pt},xshift=0pt,yshift=8pt, line width = 1]
    (10,2) -- (13,2) node [black,midway,yshift=15pt] {$i$};

    \draw [decorate,decoration={brace,amplitude=8pt,},xshift=-8pt,yshift=2pt, line width = 1]
    (3,0) -- (3,2) node [black,midway,xshift=-8pt] {$j$};
    \draw [decorate,decoration={brace,amplitude=8pt,},xshift=-8pt,yshift=-2pt, line width = 1]
    (3,-2) -- (3,0) node [black,midway,xshift=-8pt] {$j$};
    \draw [decorate,decoration={brace,amplitude=8pt},xshift=2pt,yshift=8pt, line width = 1]
    (3,0) -- (5,0) node [black,midway,yshift=8pt] {$j$};

    \node[draw=none] at (3,1.6) {$\vdots$};
    \node[draw=none] at (3,-1.4) {$\vdots$};
    \node[draw=none] at (4.525,0) {$\cdots$};
    \node[draw=none] at (6,1.6) {$\vdots$};
    \node[draw=none] at (6,-1.4) {$\vdots$};
    \node[draw=none] at (7.525,0) {$\cdots$};
    \node[draw=none] at (9,0) {$\cdots$};
    \node[draw=none] at (9,2.5) {$\cdots$};
    \node[draw=none] at (11,1.6) {$\vdots$};
    \node[draw=none] at (11,-1.4) {$\vdots$};
    \node[draw=none] at (12.525,0) {$\cdots$};

\end{tikzpicture}
\caption{$G_{ij}$}
\label{fig:tree}
\end{figure}

\begin{proof}[Outline]
For any $i,j>0$ let $G_{ij}$ be the tree shown in Figure~\ref{fig:tree}.
It consists of a `central' path of length $i(j+1)+2$, with $2i$ simple paths adjoined to the central path, each of length $j$ (which we will refer to as the `branches').  Two of these paths are adjoined to vertex $\ell$ on the central path, for every $\ell=3 \mod j+1$ (where $\ell=0$ refers to one of the leaves of the central path, $\ell=1$ its neighbor, etc.)  The tree has diameter $ij+i+j-1$ (as long as $j\ge 3$).
The vertex that we show has $\Omega(i)$ dips is the vertex $\ell=1$.  Call this vertex $v$.


We show that for an appropriately chosen fixed constant $j$ the following holds for $2 \le r \le i-1$:
\begin{equation}\label{eq:dip_location}
BC_{r(j+1)+2}(v) > BC_{r(j+1)+3}(v) < BC_{r(j+1)+4}(v).
\end{equation}
This yields $i-1$ dips in the profile, which suffices.

It is straightforward to calculate the number of paths through $v$, because all paths through $v$ start at the vertex $l=0$ and end at each of the other vertices except for $v$ itself:  $P_{r(j+1)+2}(v)=r(3j+1)+1$, $P_{r(j+1)+3}(v)=r(3j+1)+2$, and $P_{r(j+1)+4}(v)=r(3j+1)+5$.


We omit the calculations for $P_k$, which are calculated in a manner similar to the manner $P_k$ is calculated in the proof of Lemma~\ref{lem:simple_path_form}.  The number of such length-$k$ paths is counted by counting the number of paths that are  1) along the central path, 2) on two of the branches, or 3) on only one of the branches.  By direct calculation, we get that the values of $P_k$ needed are as given by Table~\ref{table:P_k_dips} when $j\ge 5$.

This explicit calculation of the values of $P_k$ and $P_k(v)$ yields that (\ref{eq:dip_location}) holds when $j=5$  for all $2 \le r \le i-1$.
\end{proof}

A more complete enumeration of the calculations may be found in the appendix.

\begin{table}[t]
\caption{Values of $P_k$ for $G_{ij}$}
\begin{center}
\begin{tabular}{| l | p{8.5cm} |}
\hline
$\boldsymbol{k}$ & $\boldsymbol{P_k}$ \\ \hline
$r(j+1)+2$ & $r^2(-\frac{9}{2} j^2-3j-\frac{1}{2})+r(9ij^2 + 6ij + i +6j^2 -\frac{23}{2}j +\frac{17}{2}) + i(-6j^2+16j-7)-3j^2+15j-17$ \\\hline
$r(j+1)+3$ & $P_{r(j+1)+2} + (9j-3)(i-r) +6j-18$\\\hline
$r(j+1)+4$ & $P_{r(j+1)+3} + (9j-3)(i-r) +6j-25$\\\hline
\end{tabular}
\end{center}
\label{table:P_k_dips}
\end{table}

Note that the number of dips in this proof is not only asymptotically as large as possible in the diameter, but also in the number of vertices, since these trees have $\Theta(i)$ vertices.

\section{Crossings} \label{sec:cro}

In this section we analyze the number of crossings between two profiles. The number of crossings of two profiles does not depend on whether normalization is
used or not - thus examining crossings avoids the issue of normalization.



\begin{figure}
\tikzstyle{vertex}=[circle,fill=black,minimum size=10pt,inner sep=0pt]
\tikzstyle{edge}=[-,color=black,line width=1.5]
\tikzstyle{edgelabel}=[color=black]
\begin{tikzpicture}[scale=.78, auto]

    \node[vertex] (u) at (0,0) [label=above left:$u$] {};
    \node[vertex] (b) at (1,0) {}
        edge[edge] node {} (u);
    \node[vertex] (c) at (2,0) {}
        edge[edge] node {} (b);
    \node[vertex] (v) at (4,0) [label=above right:$v$] {};

    \node[vertex] (u1) at (1,-1) {}
    	edge[edge] node {} (u);
    \node[vertex] (u2) at (-1,-1) {}
    	edge[edge] node {} (u);
    \node[vertex] (u3) at (-3,-1) {}
    	edge[edge] node {} (u);
    \node[vertex] (u4) at (-4,-1) {}
    	edge[edge] node {} (u);
	
   \node[vertex] (a01) at (-4,-0.6) {}
   	edge[edge] node {} (u);
   \node[vertex] (a02) at (-4,0.2) {}
   	edge[edge] node {} (u);
	
    \node[vertex] (u5) at (1,-2) {}
    	edge[edge] node {} (u1);
    \node[vertex] (u6) at (-1,-2) {}
    	edge[edge] node {} (u2);
    \node[vertex] (u7) at (-3,-2) {}
    	edge[edge] node {} (u3);
	
    \node[vertex] (u9) at (1,-3) {}
    	edge[edge] node {} (u5);
    \node[vertex] (u10) at (-1,-3) {}
    	edge[edge] node {} (u6);
    \node[vertex] (u11) at (-3,-3) {}
    	edge[edge] node {} (u7);
	
    \node[vertex] (a10) at (-2.6,-3) {}
    	edge[edge] node {} (u7);
    \node[vertex] (a11) at (-1.8,-3) {}
    	edge[edge] node {} (u7);
	
    \node[vertex] (u13) at (1,-4) {}
    	edge[edge] node {} (u9);
    \node[vertex] (u14) at (-1,-4) {};
	
    \node[vertex] (u17) at (1,-5) {};
    \node[vertex] (u18) at (-1,-5) {}
    	edge[edge] node {} (u14);

    \node[vertex] (ai0) at (-0.6,-5) {}
    	edge[edge] node {} (u14);
    \node[vertex] (ai1) at (0.2,-5) {}
    	edge[edge] node {} (u14);

    \node[vertex] (u19) at (1,-6) {}
    	edge[edge] node {} (u17);
	
    \node[vertex] (al0) at (1.4,-6) {}
    	edge[edge] node {} (u17);
    \node[vertex] (al1) at (2.2,-6) {}
    	edge[edge] node {} (u17);
	
    \node[vertex] (v1) at (3,-1) {}
    	edge[edge] node {} (v);
    \node[vertex] (v2) at (5,-1) {}
    	edge[edge] node {} (v);
    \node[vertex] (v3) at (7,-1) {}
    	edge[edge] node {} (v);
    \node[vertex] (v4) at (9,-1) {}
    	edge[edge] node {} (v);
	
    \node[vertex] (v5) at (3,-2) {}
    	edge[edge] node {} (v1);
    \node[vertex] (v6) at (5,-2) {}
    	edge[edge] node {} (v2);
    \node[vertex] (v7) at (7,-2) {}
    	edge[edge] node {} (v3);
    \node[vertex] (v8) at (9,-2) {}
    	edge[edge] node {} (v4);
	
    \node[vertex] (b00) at (3.4,-2) {}
    	edge[edge] node {} (v1);
    \node[vertex] (b01) at (4.2,-2) {}
    	edge[edge] node {} (v1);
	
    \node[vertex] (v9) at (5,-3) {}
    	edge[edge] node {} (v6);
    \node[vertex] (v10) at (7,-3) {}
    	edge[edge] node {} (v7);
    \node[vertex] (v11) at (9,-3) {}
    	edge[edge] node {} (v8);
	
    \node[vertex] (v12) at (5,-4) {}
    	edge[edge] node {} (v9);
    \node[vertex] (v13) at (7,-4) {}
    	edge[edge] node {} (v10);
    \node[vertex] (v14) at (9,-4) {}
    	edge[edge] node {} (v11);
	
    \node[vertex] (b10) at (5.4,-4) {}
    	edge[edge] node {} (v9);
    \node[vertex] (b11) at (6.2,-4) {}
    	edge[edge] node {} (v9);
	
    \node[vertex] (v15) at (7,-5) {};
    \node[vertex] (v16) at (9,-5) {}
    	edge[edge] node {} (v11);
	
    \node[vertex] (v17) at (7,-6) {}
        edge[edge] node {} (v15);
    \node[vertex] (v18) at (9,-6) {};

    \node[vertex] (bi0) at (7.4,-6) {}
    	edge[edge] node {} (v15);
    \node[vertex] (bi1) at (8.2,-6) {}
    	edge[edge] node {} (v15);
	
    \node[vertex] (v19) at (9,-7) {}
        edge[edge] node {} (v18);

    \node[vertex] (bl0) at (9.4,-7) {}
    	edge[edge] node {} (v18);
    \node[vertex] (bl1) at (10.2,-7) {}
    	edge[edge] node {} (v18);

    \draw [decorate,decoration={brace,amplitude=15pt},xshift=0pt,yshift=8pt, line width = 1]
    (0,0) -- (4,0) node [black,midway,yshift=15pt] {$2\ell$};
    \draw [decorate,decoration={brace,amplitude=10pt},xshift=8pt,yshift=0pt, line width = 1]
    (-1,-1) -- (-1,-4) node [black,midway,xshift=10pt] {$2i-1$};
     \draw [decorate,decoration={brace,amplitude=10pt},xshift=8pt,yshift=0pt, line width = 1]
    (1,-1) -- (1,-5) node [black,midway,xshift=10pt] {$2\ell-3$};
     \draw [decorate,decoration={brace,amplitude=8pt},xshift=-8pt,yshift=0pt, line width = 1]
    (-4,-1) -- (-4,0.2) node [black,midway,xshift=-8pt] {$A_0$};
     \draw [decorate,decoration={brace,amplitude=8pt},xshift=0pt,yshift=-8pt, line width = 1]
    (-1.8,-3) -- (-3,-3) node [black,midway,yshift=-8pt] {$A_1$};
     \draw [decorate,decoration={brace,amplitude=8pt,mirror},xshift=0pt,yshift=-8pt, line width = 1]
    (-1,-5) -- (0.2,-5) node [black,midway,yshift=-22pt] {$A_i$};
     \draw [decorate,decoration={brace,amplitude=8pt,mirror},xshift=0pt,yshift=-8pt, line width = 1]
    (1,-6) -- (2.2,-6) node [black,midway,yshift=-22pt] {$A_{\ell-1}$};
    \draw [decorate,decoration={brace,amplitude=8pt,mirror},xshift=0pt,yshift=-8pt, line width = 1]
    (3,-2) -- (4.2,-2) node [black,midway,yshift=-22pt] {$B_0$};
     \draw [decorate,decoration={brace,amplitude=8pt,mirror},xshift=0pt,yshift=-8pt, line width = 1]
    (5,-4) -- (6.2,-4) node [black,midway,yshift=-22pt] {$B_1$};
     \draw [decorate,decoration={brace,amplitude=8pt,mirror},xshift=0pt,yshift=-8pt, line width = 1]
    (7,-6) -- (8.2,-6) node [black,midway,yshift=-22pt] {$B_i$};
    \draw [decorate,decoration={brace,amplitude=10pt},xshift=8pt,yshift=0pt, line width = 1]
    (7,-1) -- (7,-5) node [black,midway,xshift=10pt] {$2i$};
    \draw [decorate,decoration={brace,amplitude=10pt},xshift=8pt,yshift=0pt, line width = 1]
    (9,-1) -- (9,-6) node [black,midway,xshift=10pt] {$2\ell-2$};
     \draw [decorate,decoration={brace,amplitude=8pt,mirror},xshift=0pt,yshift=-8pt, line width = 1]
    (9,-7) -- (10.2,-7) node [black,midway,yshift=-22pt] {$B_{\ell-1}$};

    \node[draw=none]  at (3,0) {$\cdots$};
    \node[draw=none] at (0,-1) {$\cdots$};
    \node[draw=none] at (-2,-1) {$\cdots$};
    \node[draw=none] at (-4,-0.1) {$\vdots$};
    \node[draw=none] at (-2.2,-3) {$\cdots$};
    \node[draw=none] at (1,-4.4) {$\vdots$};
    \node[draw=none] at (-1,-3.4) {$\vdots$};
    \node[draw=none] at (-0.2,-5) {$\cdots$};
    \node[draw=none] at (1.8,-6) {$\cdots$};
    \node[draw=none] at (6,-1) {$\cdots$};
    \node[draw=none] at (8,-1) {$\cdots$};
    \node[draw=none] at (3.8,-2) {$\cdots$};
    \node[draw=none] at (5.8,-4) {$\cdots$};
    \node[draw=none] at (7,-4.4) {$\vdots$};
    \node[draw=none] at (7.8,-6) {$\cdots$};
    \node[draw=none] at (9.8,-7) {$\cdots$};
    \node[draw=none] at (9,-5.4) {$\vdots$};

\end{tikzpicture}
\vskip -0.1 in
\caption{$T_{\ell}$}
\label{fig:two_profile_graph}
\end{figure}

\begin{theorem}\label{thm:crossings}
For any integer $\ell\ge 1$, there is a tree $T$ of diameter $\Theta(\ell)$ with vertices $u$ and $v$ whose profiles cross $\Omega(\ell)$ times.

\end{theorem}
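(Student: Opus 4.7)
The plan is to analyze the explicit construction $T_\ell$ depicted in Figure~\ref{fig:two_profile_graph}: a central path of length $2\ell$ joining $u$ to $v$, with small star-shaped bushes $A_0,\ldots,A_{\ell-1}$ attached at carefully chosen distances from $u$, and analogous bushes $B_0,\ldots,B_{\ell-1}$ attached on the $v$-side. The distances of $A_i$ from $u$ and of $B_i$ from $v$ are offset by one, so that the spike that $A_i$ produces in $p_k(u)$ and the matching spike that $B_i$ produces in $p_k(v)$ occur at consecutive values of $k$. Because the normalization factor is the same for both profiles, a crossing of ${\cal B}(u)$ and ${\cal B}(v)$ at index $k$ is equivalent to a sign change of $P_k(u) - P_k(v)$, so it suffices to prove $\Omega(\ell)$ sign changes of this difference.

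First I would derive closed-form expressions for $p_k(u)$ and $p_k(v)$ by classifying the shortest paths through $u$ (respectively $v$) according to the subtree containing each endpoint. In $T_\ell$, each leaf of a bush $A_i$ lying at distance $\alpha_i$ from $u$ contributes to $p_k(u)$ the number of vertices at distance exactly $k - \alpha_i$ from $u$ on the opposite side of $u$, and the analogous statement holds for $v$. Because the two sides of the construction are built from matching branches, most contributions cancel in the difference $p_k(u) - p_k(v)$, which reduces to a controlled baseline term plus the bush-induced spikes; by the single-edge offset these spikes alternate in sign as $k$ increases.

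The main obstacle is upgrading alternation of the single-step differences $p_k(u) - p_k(v)$ to alternation of the cumulative quantity $P_k(u) - P_k(v)$, since it is the latter that determines crossings. To overcome this I would give each bush sufficiently many leaves (a constant number per bush, but large enough relative to the branch arithmetic) so that every individual spike dominates the cumulative baseline drift accrued between consecutive spikes. Then $P_k(u) - P_k(v)$ flips sign between each pair of consecutive spike indices, yielding $\Omega(\ell)$ crossings in total. Finally I would verify that the diameter of $T_\ell$ is $\Theta(\ell)$: the central path contributes $2\ell$, and the longest $u$-branch and $v$-branch add $O(\ell)$ on each side, so the overall diameter is $\Theta(\ell)$ as required. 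The delicate point, as in Theorem~\ref{thm:dips}, is verifying by direct computation that the concrete parameter choices indeed produce the sign alternation; the analysis itself is routine once the right parameters are fixed.
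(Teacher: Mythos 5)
Your construction and high-level plan coincide with the paper's: the same tree $T_\ell$, the same one-step offset between the $A$-bushes and the $B$-bushes so that the inequalities $BC_{2i}(u) > BC_{2i}(v)$ and $BC_{2i+1}(v) > BC_{2i+1}(u)$ alternate, and the same reduction of a crossing at index $k$ to a sign change of $P_k(u)-P_k(v)$ (the normalization being common to both profiles). The gap is in the one step you dismiss as routine: the choice of the bush sizes $a_i=|A_i|$ and $b_i=|B_i|$. You assert that a \emph{constant} number of leaves per bush suffices because ``most contributions cancel'' in the difference and each spike need only beat a baseline drift. That is not what happens. The quantity whose sign must flip is the cumulative difference $P_k(u)-P_k(v)$, and at the step where $A_{i-1}$ must restore $P_k(u)>P_k(v)$, the deficit to be overcome includes the full contribution of $B_{i-2}$ and all earlier bushes, where $B_{i-2}$ was itself chosen one step earlier to dominate everything preceding it. Moreover the contributions are quadratic in the bush sizes: leaves of two bushes on different branches of $u$ (resp.\ of $v$) pair up to give $a_ia_j$ (resp.\ $b_ib_j$) paths, and because of the one-edge offset these products enter $P_k(u)$ and $P_k(v)$ at indices differing by two, so they do not cancel at the indices where the sign must change.

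Consequently each new bush must be taken at least on the order of the square of the sum of all previously chosen ones; this is exactly the paper's inductive prescription $a_{\frac{k}{2}-1} \ge s_k - 2\sum_{j=0}^{\frac{k}{2}-2} a_j$ (and its analogue for $b$), with $s_k$ built from the squared upper bounds in~\eqref{eq:p_k_u} and~\eqref{eq:p_k_v}. The sizes therefore grow doubly exponentially and $T_\ell$ has $\tilde{O}(\ell^{2^\ell})$ vertices, which is precisely why the paper obtains only $\Omega(\log\log n)$ crossings as a function of the number of vertices and explicitly leaves improving that dependence as an open problem. If constant-size bushes worked, $T_\ell$ would have $O(\ell^2)$ vertices and you would have resolved that open problem in passing. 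As written, your argument needs either the paper's inductive domination of the opponent's entire cumulative count, or a genuinely new cancellation analysis that you have not supplied.
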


\begin{proof}
For each $\ell\ge 1$, we build a tree $T_\ell$ with vertices $u$ and $v$ as shown in Figure~\ref{fig:two_profile_graph}, where the number of leaves in the sets $A_i$ and $B_i$ will be determined shortly.  This tree consists of a simple path of length $2\ell$ between $u$ and $v$ and $\ell$ brooms attached to each.  The tree is constructed so that paths of length $2i$ that include $u$ (as an interior vertex) can reach the set $A_{i-1}$ of vertices as depicted (as exterior vertices), but that paths of length $2i$ through $v$ cannot reach the set $B_{i-1}$ of vertices.  Let $a_i = |A_i|$ and $b_i=|B_i|$.  We then choose $a_{i-1}$ so that $BC_{2i}(u) > BC_{2i}(v)$, and similarly we choose $b_{i-1}$ so that $BC_{2i+1}(v) > BC_{2i+1}(u)$.  This gives us $2\ell-1$ crossings when $1\le i \le \ell-1$.

We will start, for $2\le k\le 2\ell+1$, by upper and lower bounding $p_k(u)$ and $p_k(v)$ as a function of the $a_i$'s and $b_i$'s.  Certainly, since the path between $u$ and $v$ is length $2\ell$, all paths containing $u$ do not contain $v$ and vice-versa (as internal nodes).  So we can upper bound $p_k(u)$ as the square of the number of vertices within distance $k-1$ of $u$ on both the path between $u$ and $v$ and the subtree of $u$, ignoring that paths can't start and end at the same vertex (this approximation will be easily sufficient for our needs.)  We show the following holds:

\begin{equation}\label{eq:p_k_u}
a_{\lfloor\frac{k}{2}\rfloor-1} < p_{k}(u) <  \left(\sum_{j=0}^{\lfloor\frac{k}{2}\rfloor-1} a_j +(\ell+1)(k-1)\right)^2.
\end{equation}

Since the leaves of the $a_{\lfloor\frac{k}{2}\rfloor-1}$ edges as depicted are length no more than $k-1$ from $u$, there is at least one path through $u$ of length $k$, so certainly $p_k(u) > a_{\lfloor\frac{k}{2}\rfloor-1}$.  On the subtree of $u$, at distance $j$ from $u$, there are either $a_j + k-1$ vertices or $k$ vertices, depending on the parity of $j$.  Thus there are no more than a total of $\sum_{j=0}^{\lfloor\frac{k}{2}\rfloor-1} a_j +\ell(k-1)$ vertices.  Including the additional $k-1$ vertices on the path between $u$ and $v$, this yields an upper bound on the number of paths of length no more than $k$ containing $u$ as given on the right-hand side of equation~\eqref{eq:p_k_u}.  Similarly, we get the following for $p_k(v)$:

\begin{equation}\label{eq:p_k_v}
b_{\lfloor\frac{k}{2}\rfloor-1} < p_{k}(v) <  \left(\sum_{j=0}^{\lceil\frac{k}{2}\rceil-2} b_j +(\ell+1)(k-1)\right)^2.
\end{equation}

Now we choose each $a_i$ and $b_i$ inductively, ensuring that $BC_{2i}(u) > BC_{2i}(v)$ and $BC_{2i+1}(v) > BC_{2i+1}(u)$.  In the base case, for $k=2$, where we will define $a_{\frac{k}{2}-1}=a_0$, we will need $BC_{2}(u) > BC_2(v)$.  $BC_2(u) = p_2(u) > a_k$ and the degree of $v$ is $\ell+1$, so it suffices to set $a_k = \binom{\ell+1}{2}$.

For arbitrary even $k$, $a_0,\ldots,a_{\frac{k}{2}-2}$ and $b_0,\ldots,b_{\frac{k}{2}-2}$ are already defined.
We want to show that $BC_k(u) > BC_k(v)$.  $BC_k(u) = \sum_{i=2}^k p_i(u) > 2\left(\sum_{j=0}^{\frac{k}{2}-2} a_{j}\right) + a_{\frac{k}{2}-1}$.  $BC_k(v) = \sum_{i=2}^k p_i(v)$,  and from the right-hand side of~\eqref{eq:p_k_v}, it suffices to notice that each $p_i(v)$ from this sum is upper-bounded by a function of $\ell$ and $b_0,\ldots,b_{\frac{k}{2}-2}$; for convenience call the sum over all $p_i(v)$ of these upper bounds $s_k$.  Thus it suffices to set $a_{\frac{k}{2}-1}$ such that \[a_{\frac{k}{2}-1} \ge s_k - 2\sum_{j=0}^{\frac{k}{2}-2} a_{j}.\]

Similarly, for arbitrary odd $k$, $a_0,\ldots,a_{\frac{k-1}{2}-1}$ and $b_0,\ldots,b_{\frac{k-1}{2}-2}$ are already defined and we want to show that $BC_k(u) < BC_k(v)$.  And similarly, we know that $BC_k(v) > 2\left(\sum_{j=0}^{\frac{k-1}{2}-1} b_{j}\right)$ and $BC_k(u)$ is upper-bounded by a function $s'_k$ of $\ell$ and $a_0,\ldots,a_{\frac{k-1}{2}-1}$.  So we set $b_{\frac{k-1}{2}-1}$ such that
\[b_{\frac{k-1}{2}-1} \ge \frac{1}{2}\left(s'_k - 2\sum_{j=0}^{\frac{k-1}{2}-2} b_{j}\right).\]

Hence constructing $a_i$ and $b_i$ for all $i\le\ell-1$ yields the desired $2\ell-1$ crossings.

\end{proof}

The bound of $\Omega(\ell)$ crossings is worst possible in order of magnitude as a function of the diameter, but unlike in the construction in Theorem~\ref{thm:dips}, the diameter is not linear in the number of vertices $n$.
  Indeed, $T_\ell$ has $\tilde{O}(\ell^{2^\ell})$ vertices, which yields $\Omega(\log \log n)$ crossings.  We leave for future work if it is possible to increase the number of crossings as a function of $n$.

\section{Experiments} \label{sec:expe}

Theorem~\ref{thm:pref8} shows that the profiles of expected betweenness centralities do not cross, but it does not imply results on the expected number of crossings or the probability that there is at least one crossing.  Towards results of this kind, we give experimental results that give initial evidence for the behavior of profiles in scale-free random trees in terms of crossings and dips.

\begin{figure}[t]
\begin{center}
\centerline{\includegraphics[scale=0.62]{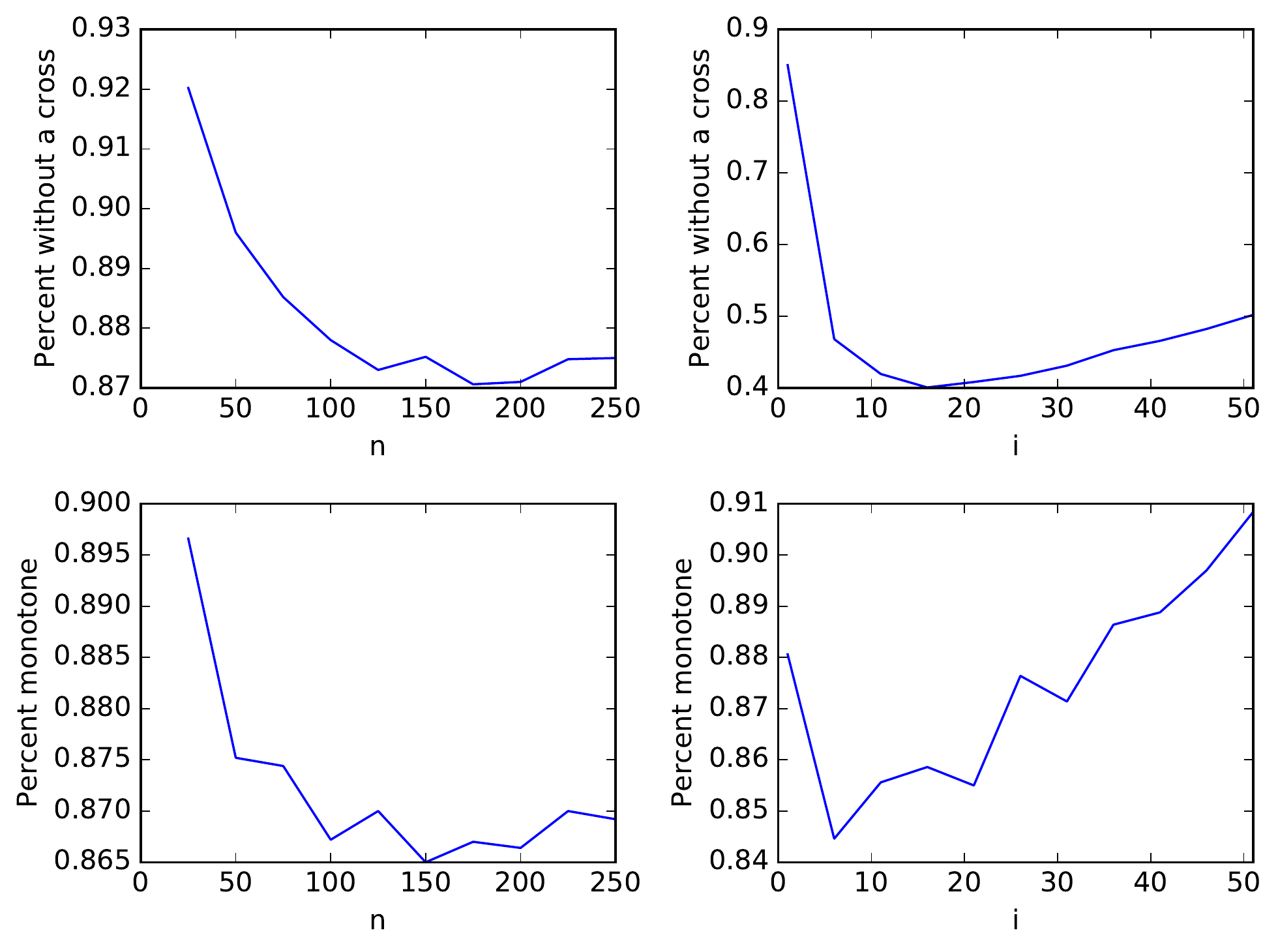}}
\end{center}
\vskip -.15in
\caption{Clockwise, from top left:  the estimated probability that $\mathcal{B}(1)$ and $\mathcal{B}(2)$ have no crossings as the number of vertices $n$ increases, the estimated probability that $\mathcal{B}(i)$ and $\mathcal{B}(i+1)$ have no crossings as $i$ increases, the estimated probability that $\mathcal{B}(i)$ is monotone, as $i$ increases, and the estimated probability that $\mathcal{B}(1)$ is monotone, as $n$ increases.}
\label{fig:plots}
\end{figure}

We give here some examples of experiments performed with scale-free trees and discuss the problems suggested by these experiments.
In Figure~\ref{fig:plots} (top left), we estimate the probability that there are no crossings between the profiles of the first two vertices.  We also show the results for fixed $n$
(we use $n=250$) but between vertices $i$ and $i+1$ as $i$ increases (top right) and results for the likelihood of monotonicity - i.e.~no dips - of a vertex (bottom).  In each experiment, each value is averaged over $5000$ independent trials.

As $n$ increases the likelihood of there being no dips in the profile of vertex $1$ (bottom left) or no crossings between the profiles of vertex $1$ and $2$ (top left) actually decreases (where small deviations are likely due to sampling error).
This is indication that profiles in large trees may be more likely to behave badly. In that context,
the worst-case results discussed in Sections~\ref{sec:dip} and~\ref{sec:cro} may provide useful information.

On the other hand, the likelihood of having no crossings and being monotonic as the vertex $i$ increases for fixed $n$ (again, we use $n=250$) looks very different (top right and bottom right of Figure~\ref{fig:plots}).  After initial decreases, the likelihood of being monotonic and not having crossings increases as $i$ increases.  This implies that having crosses and dips may be limited to earlier vertices and that vertices close to leaves do not behave similarly to our worst-case results.  We leave a more precise analysis of which vertices in scale-free trees behave badly for future work.

\section{Remarks and future work}

In this paper we studied $k$-betweenness centrality by considering properties of the betweenness centrality profiles of trees, motivated by experimental results of 
Pfeffer and Carley~\cite{pfeffer2012k}. We showed that for scale-free random trees it holds that if $v < w$ then $\mathbb{E}[BC_k(v)] > \mathbb{E}[BC_k(w)]$;
thus the profile of the expected betweenness centralities of an earlier vertex dominates the profile of the expected betweenness centralities of a later vertex.

We also show, in the worst case, that there are profiles of vertices in trees where the number of crossings and dips is proportional to the diameter, which is always an upper bound.  This means that in the worst case it suffices to consider trees; vertices' profiles in trees may behave as non-smoothly as vertices' profiles in any graph.  

Smoothness is an important property for understanding how the structure of local neighborhoods of a graph generalize to the entire graph.
However, our work demonstrates that in the worst case vertices' profiles may be very non-smooth and that smoothness assumptions will not hold in general, opening up for future work questions on establishing more precise bounds on how severe departures from smoothness may be.

For example, is it possible to construct examples where many vertices in the same tree have profiles all with many dips or crossings?  More generally, to what degree is non-smooth behavior an isolated problem where any dips or crossings are very small or short-lived?

We also leave as an open problem further results on the expected behavior of profiles in scale-free trees, such as determining the number of crosses, say, between the root and the next vertex, as $n$ goes to infinity.  Our preliminary experimental results, described in the previous section, provide some information towards this type of question.
Problems about crossings tend to be easier than problems about dips, as the former do not have to deal with a normalization term.  Our experimental results, however, give a starting point towards, say, determining the expected number of dips in a profile.

Our result for the scale-free random trees is based on the work of Bollob\'as and Riordan~\cite{bornholdt,bollobas2004shortest}; their other results in~\cite{bollobas2004shortest} may be relevant in this context. Both earlier techniques and results, e.g.~Smythe and Mahmoud~\cite{smythe1995survey}, and more recent results, e.g.~Bubeck et al.~\cite{bubeck14seed,bubeck14adam} and Jog and Loh~\cite{jog16tree}, form an emerging picture of scale-free trees and may be useful for future work.

\bibliographystyle{plain}
\bibliography{cits}


\section*{Appendix}
\appendix

\section{Dips}

In the proof of Theorem~\ref{thm:dips}, we omitted the calculation of $P_k$ for $G_{ij}$, which we return to in this section.  Specifically, we show that $p_k$, the number of paths of exactly length $k$, is as given in Table~\ref{table:p_k_dips}, for all $2\le r \le i-1$ and $j\ge 5$ (if $j < 5$ then not all of these cases are distinct):
\begin{table}[h]
\caption{Values of $p_k$ for $G_{ij}$}
\label{table:p_k_dips}
\begin{center}
\begin{tabular}{| l | p{6cm} |}
\hline
$\boldsymbol k$ & $\boldsymbol{p_k}$\\ \hline
$k=2$ & $3ij+4i+1$ \\ \hline
$k=3$ & $3ij+7i$ \\\hline
$4\le k\le j+1$ & $i(3j+3)+(3k-5)(i-1)+6$ \\\hline
$j+2\le k\le j+4$ & $k(3i-9)+i(3j-6)+6j+21$ \\\hline
$j+5\le k\le 2j$ & $k(3i-9)+i(3j-6)+6j+23$ \\\hline
$k = 2j+1$ & $k(-4i+1)+i(17j+1)-14j+13$ \\\hline
$2j+2\le k\le 2j+3$ & $k(-4i-1)+i(17j+9)-10j+9$ \\\hline
$r(j+1)+2\le k\le r(j+1)+3$ & $-9k+i(9j-3)+6j+12r+9$ \\\hline
$r(j+1)+4\le k\le (r+1)j+r-1$ & $-9k+i(9j-3)+6j+12r+11$ \\\hline
$(r+1)j+r\le k\le (r+1)j+r+1$ & $k(4i-3-4r)+i(-3+5j-4rj-4r)+j(-2r+4r^2)+6r+4r^2+11$ \\\hline
$k=(r+1)j+r+2$ & $k(-4i-3+4r)+i(4rj+13j+4r+5)+j(-4r^2-10r)-4r^2-2r+9$ \\
\hline
\end{tabular}
\end{center}
\end{table}

The number of such length-$k$ paths are counted by counting the number of paths that are 1) along the central path, 2) on two of the branches, and 3) on only one of the branches.  For brevity's sake, we will show just the cases when $k=2$ and the cases when $r(j+1)+2 \le k \le (r+1)j+r-1$.  The other cases are similar.

We use two simple facts:  There are $m-k+1$ subpaths of length $k\le m$ along a path of length $m$, and for a fixed path of length $m$, there are $k-m+1$ paths of length $k \ge m$ that contain the fixed path as a subpath.

First we show that there are $3ij+4i+1$ length $k=2$ paths.  The length of the central path is $i(j+1)+2$, so there are $i(j+1) + 2 - k+1$ such subpaths of length $k$.  Otherwise a length $k$ path must intersect either two branches that attach to the same vertex or only one of those branches.  There are $i$ pairs of branches, each inducing a path of length $2j$, so there are $i(2j-k+1)$ such paths.  Else the length $k$ path intersects only one branch.  For each branch, there are two paths that intersect that branch but do not intersect another branch:  the path can go to the `left' or `right', giving us a total of $4i$ such paths.  Summing up, we get $3ij+4i+1$ paths.

The other cases we consider are the two cases are (as given above) when $r(j+1) + 2 \le k \le (r+1)j+r-1$, for which we will need $j\ge 5$.  The length of the central path is $i(j+1)+2$, so there are $i(j+1) + 2 - k+1$ such subpaths of length $k$.  Otherwise, a path of length $k$ intersects one of the branches in at least one edge.  Paths can, of course, only intersect at most two different branches.  Since $k>2j$, these branches cannot attach to the central path at the same vertex.  Then branches are at integer multiples of $j+1$ apart.  So a length $k$ path is too short to reach between branches that are $r+1$ apart (we say the branches are 0 apart when they are attached to the same vertex, etc.) because then the vertices at which the branches attach are distance $(r+1)(j+1)$ from each other, but $k \le (r+1)j+r-1$.  Similarly, the branches cannot be as close as $r-1$ apart.  Thus there are only two cases for the paths that intersect at exactly two branches:  the branches are either $r-1$ apart or $r$ apart.  There are $4(i-(r-1))$ pairs of branches $r-1$ apart, and the path induced by the two branches and the central path is length $(r+1)j+r-1$, so there are a total of \[4((r+1)j+r-k)(i-r+1)\] length $k$ paths in this case.  (Note in this case all such paths of length $k$ automatically include all of the appropriate edges of the central path.)  Otherwise, the branches are $r$ apart.  However, now not all length $k$ paths along the path induced by the branches contain the part of the induced path along the central path.  Here we use the second simple fact as mentioned above:  the length of the central path that all the length k paths must include is $r(j+1)$, but they must include at least one vertex from each of the branches as well, so there are $k-r(j+1)+1-2$ such paths.  Now there are $4(i-r)$ pairs of branches $r$ apart, yielding a total of \[4(k-r(j+1)-1)(i-r)\] such paths.  Finally, there are the length $k$ paths that intersect only one branch.  A path starting on a branch has two directions along the central path where its end can lie.  There are $j$ vertices along a branch where the vertex can start, and $2(i-r)$ such branches if the paths goes to the `right' and $2(i-(r+1))$ going to the `left' for which no length $k$ path is too long.  There are two more branches in either direction at which length $k$ paths can start, for which it is straightforward to verify that there are $2((r+1)j+r-1-k+1)$ going to the `right' end, and

 \[\left\{
                    \begin{array}{ll}
                        2(j+r(j+1) + 3-k+1) & \text{if } k > r(j+1) + 3,\\
                        2j & \text{if } k = r(j+1) + 3,\\
                        2j+2 & \text{if } k = r(j+1) + 2\\
                    \end{array}
\right\}\]

paths going to the `left' end.  Thus we get the two cases as listed above, for $r(j+1)+2 \le k \le r(j+1) + 3$ and for $r(j+1)+4 \le k \le r(j+1) + r- 1$.

Using these numbers, we can now calculate $P_k = \sum_{\ell=2}^k p_\ell$, yielding the values given in Table~\ref{table:P_k_dips_appendix}.

\begin{table}[h]
\caption{Values of $P_k$ for $G_{ij}$}
\label{table:P_k_dips_appendix}
\begin{center}
\begin{tabular}{| l | p{8.5cm} |}
\hline
$\boldsymbol{k}$ & $\boldsymbol{P_k}$ \\ \hline
$r(j+1)+2$ & $r^2(-\frac{9}{2} j^2-3j-\frac{1}{2})+r(9ij^2 + 6ij + i +6j^2 -\frac{23}{2}j +\frac{17}{2}) + i(-6j^2+16j-7)-3j^2+15j-17$ \\\hline
$r(j+1)+3$ & $P_{r(j+1)+2} + (9j-3)(i-r) +6j-18$\\\hline
$r(j+1)+4$ & $P_{r(j+1)+3} + (9j-3)(i-r) +6j-25$\\\hline
\end{tabular}
\end{center}
\end{table}

Counting the number of paths through $v$ is much easier.  Since $v$ has only one vertex on its left, $P_k(v)$ is just the number of vertices to its right that are within distance $k$.  Specifically, this gives us

 \[\left\{
                    \begin{array}{ll}
                        P_{r(j+1)+2}(v)=&r(3j+1)+1,\\
                        P_{r(j+1)+3}(v)=&r(3j+1)+2,\\
                        P_{r(j+1)+4}(v)=&r(3j+1)+5.\\
                    \end{array}
\right. \]

Plugging in these values and the values from Table~\ref{table:P_k_dips_appendix}, we get that for $j=5$ and for all $2 \le r \le i-1$,
\[\frac{P_{r(j+1)+2}(v)}{P_{r(j+1)+2}} > \frac{P_{r(j+1)+3}(v)}{P_{r(j+1)+3}} < \frac{P_{r(j+1)+4}(v)}{P_{r(j+1)+4}}.\]

That is, $BC_{6r+2}(v) > BC_{6r+3}(v) < BC_{6r+4}(v)$ for all $2 \le r \le i-1$ in $G_{i,5}$, as needed.

Note that the number of dips we get, namely $i-2$ dips, is not only asymptotically as large as possible in the diameter, but also in the number of vertices, since the trees $G_{i,5}$ have $\Theta(i)$ vertices.


\end{document}